\def\skipnoindent{\vskip0.1in\noindent}
\newcommand{\Paragraph}[1]{\smallskip\noindent{\it \textbf{#1:}}}
\newcommand\floor[1]{\lfloor#1\rfloor}
\newtheorem{thm}{Theorem}
\newtheorem{definition}{Definition}
\newcommand{\acronym}{\emph{NoiSense}\xspace}
\begin{document}

\title{\acronym: Detecting Data Integrity Attacks on Sensor Measurements using Hardware based Fingerprints}

\author{Chuadhry Mujeeb Ahmed,%~\IEEEmembership{Member,~IEEE,}
        Aditya Mathur,%~\IEEEmembership{Fellow,~OSA,}
        and~Martin Ochoa, \\ 
        Address: Singapore University of Technology and Design \\ %~\IEEEmembership{Life~Fellow,~IEEE}% <-this % stops a space
%\IEEEcompsocitemizethanks{\IEEEcompsocthanksitem M. Shell was with the Department
%of Electrical and Computer Engineering, Georgia Institute of Technology, Atlanta,
%GA, 30332.\protect\\
% note need leading \protect in front of \\ to get a newline within \thanks as
% \\ is fragile and will error, could use \hfil\break instead.
E-mail: chuadhry@mymail.sutd.edu.sg}

\IEEEtitleabstractindextext{%
\begin{abstract}

%A novel method is proposed to detect potential attacks on physical components in a Cyber Physical System (CPS). Physical attacks are detected  by comparing noise patterns from sensor measurements to a reference noise pattern (sensor's fingerprint). For a physically modified, or replaced sensor,  the proposed method issues an alert indicating that a sensor is either defective or is  compromised. A fingerprint is created for each sensor based on it's noise characteristics. Such fingerprints exist due to manufacturing variations in sensors, even if those are of the same type. Support Vector Machine (SVM) is used to train the classification model based on sensor noise. Based on the extracted features and the trained model, a sensor is identified as one of the installed sensors in the plant. Experiments were performed on sensors installed in two operational testbeds, namely  for   water treatment and  water distribution. In addition to testbed sensors, we demonstrate the scalability of the proposed scheme by testing it on $23$ low cost ultrasonic sensors. Using the  proposed method, we were able to identify  sensors with $97\%$ accuracy. For sensor identification, a majority of the sensors achieve a True Positive rate of $100\%$ and a False Positive rate of $0\%$. 

In recent years fingerprinting of various physical and logical devices has been proposed with the goal of uniquely 
identifying users or devices of mainstream IT systems such as PCs, Laptops and smart phones. On the other hand, the 
application of such techniques in Cyber-Physical Systems (CPS) is less explored due to various reasons, such as 
difficulty of direct access to critical systems and the cost involved in faithfully reproducing realistic scenarios. 
In this work we evaluate the feasibility of using fingerprinting techniques in the context of realistic 
Industrial Control Systems related to water treatment and distribution. 
Based on experiments conducted with  44 sensors of six different types,  it is shown that noise patterns due to microscopic imperfections in hardware manufacturing can be used to uniquely identify 
sensors in a CPS with up to $97\%$ accuracy.  The proposed technique can be used in
 to detect physical attacks, such as the replacement of legitimate sensors by faulty or manipulated sensors. 
We also  show that, unexpectedly, sensor fingerprinting can effectively detect advanced physical attacks such 
as analog sensor spoofing due to variations in received energy at the transducer of an active sensor. Also, it can be leveraged to construct 
a novel challenge-response protocol that exposes cyber-attacks.

\end{abstract}

% Note that keywords are not normally used for peerreview papers.
%\begin{IEEEkeywords}
%Device Fingerprinting, Sensor Fingerprinting, Physical Attack Detection, Cyber Physical Systems, Industrial Control Systems, Cyber Security, Attack Detection,  Smart Water Networks.
%\end{IEEEkeywords}
}

% make the title area
\maketitle

% To allow for easy dual compilation without having to reenter the
% abstract/keywords data, the \IEEEtitleabstractindextext text will
% not be used in maketitle, but will appear (i.e., to be "transported")
% here as \IEEEdisplaynontitleabstractindextext when compsoc mode
% is not selected <OR> if conference mode is selected - because compsoc
% conference papers position the abstract like regular (non-compsoc)

% papers do!

\IEEEdisplaynontitleabstractindextext

% \IEEEdisplaynontitleabstractindextext has no effect when using
% compsoc under a non-conference mode.

% For peer review papers, you can put extra information on the cover
% page as needed:
% \ifCLASSOPTIONpeerreview
% \begin{center} \bfseries EDICS Category: 3-BBND \end{center}
% \fi
%
% For peerreview papers, this IEEEtran command inserts a page break and
% creates the second title. It will be ignored for other modes.

\IEEEpeerreviewmaketitle

\section{Introduction}
%What is the general problem?
A Cyber Physical System (CPS) is a distributed computing system that is sensing and actuating on the physical world~\cite{alur2015principles,nist_cps_def}. For instance large critical infrastructures, such as power generation and water treatment plants, are particular examples of CPS, also known as Industrial Control Systems (ICS) \cite{EAlee2008}. An ICS consists of cyber components such as  Programmable Logic Controllers (PLCs), sensors, actuators, Supervisory Control and Data Acquisition (SCADA) workstation, and Human Machine Interface (HMI) elements interconnected via a communications network.  The PLCs and the SCADA workstation operate in concert to control the physical process. The widespread use of such technology, and the increasing number of incidents involving it~\cite{ukranian_case2016analysis,slay_miller_2008,stuxnet}, has raised concern on the security of CPS~\cite{cardenas2009challenges}. 

%\noindent {\em Threat models and the need for  detecting physical attacks}: 
%Why is the general problem difficult?
Different from an attack against conventional IT systems, an attack on a CPS may directly result in physical damage to property (for instance due to an explosion~\cite{aurora_attack,German_steelmill_attack}) and the loss of human life (such as operators of the CPS or people depending on a critical infrastructure). Security requirements of a CPS are thus different from those found in conventional cyber security~\cite{cardenas2009challenges}, in particular the \emph{physical} integrity of the system and its availability are often more important than confidentiality aspects~\cite{Gollmann2016}. 
Moreover, in a CPS an attacker may not only compromise the computing elements but he might partially also do so for physical components or the physical environment. This is illustrated for instance by the recent attack of~\cite{drone_Son2015}, where a crash is induced in a drone by means of a sound signal that confuses the gyroscope, or by recent analog sensor spoofing attacks~\cite{shoukry2015,yasser-2013,sampling_race2016,ghosttalk_2013}. This makes CPS security challenging, since we need to expand traditional attacker models to include physical and cyber-physical characteristics of a system~\cite{marco_cpdy2016}, and consequently there is a need for novel security solutions in the intersection of these worlds.
%In a CPS both cyber and physical attacks are possible and ought to be considered when designing security mechanisms.  Therefore, an understanding of the physical subsystem of a CPS, and how it could be compromised, is essential to ensure CPS security\,\cite{raheem2016}. 
%Consider to merge with before
%\noindent An adversary may add, remove, or replace some physical components  resulting  in severe undesirable consequences. For example, the dangers from the manipulation of the   physical surroundings of analog sensors of an anti-lock braking system have been demonstrated\,\cite{yasser-2013}. \textcolor{red}{Sensors are a bridge between the physical and cyber domain in a CPS. Traditionally, an intrusion detection system (IDS), monitors a communication network or a computing host to detect attacks. However, physical tampering with sensors or sensor spoofing in physical/analog domain, may go undetected by the legacy IDS. Attacks in physical domain, are a threat, unique to CPS. }
%Clearly, there is a need to ensure CPS security at the physical layer in addition to that in the cyber layer~\cite{shoukry2015}. 

%What is the concrete problem we want to solve?
In a CPS thus, the \emph{veracity}~\cite{Gollmann2016} of sensor data is paramount to the security of the system~\cite{surveyCPSsec_IEEE2016_sensorAtt_Mclaughlin}. The impact of various kinds of integrity attacks on sensor values has been studied mathematically in the control theory community, including false data injection~\cite{Mo2012_falsedatainjection}, replay attacks~\cite{Mo2009_replayAttacks}, and stealthy attacks~\cite{dan2010stealth}.  To implement such attacks in practice, attackers can spoof sensor values either by physical means, as discussed above, for instance by analog spoofing attacks~\cite{shoukry2015,yasser-2013,ghosttalk_2013}, or by tampering with the communication channel between a sensor and a controller by means of a classical man-in-the-middle attack~\cite{urbina_CCS2016limiting}. 
%In this work, we propose  a novel method to detect violations to data integrity that applies to both physical and cyber-attacks.   

%Our attacker model (you can consider merging with the paragraph above)

%Why physical attacker model is relevant
 %This article expands on the  idea of identifying physical components, i.e. sensors in an operational ICS, such as water treatment and distribution\,\cite{swat2016,wadi2017}. 

Digital or cyber-attacks can be performed by attackers who have compromised the CPS communication network and can spoof digitized sensor readings. % (as in a traditional man-in-the-middle adversary model). 
For instance, reverse engineering of the well-known Stuxnet worm \cite{stuxnet,bruno_mo_physicalwatermarking_2015} showed that it would attempt to 
perform a replay attack of normal states of a CPS while carrying out an attack to conceal it from human operators. 
Physical attacks can be carried out by attackers who have physical access to a CPS, such as malicious insiders~\cite{quarta_sp2017,slay_miller_2008,McLaughlin2010,CPSsecsurvey_JIoT2017_insiders_phyA}, or attackers that have access to 
a CPS distributed over a large geographical area~\cite{CPSsecsurvey_JIoT2017_insiders_phyA}. For instance, water distribution networks or smart grids are usually spread over hundreds of miles across  city, and 
thus it is hard to guard all the components~\cite{raheem2016,sridhar_smartgrid2012}. In \cite{McLaughlin2010} ease of physically tampering energy meters at the consumer 
end, without leaving any evidence, is shown. Authors in~\cite{drone_Son2015,ghosttalk_2013} report  sensor spoofing attacks by considering physical proximity of a 
few centimeters to the victim. Therefore, physical attacks are also a realistic threat model for a CPS~\cite{raheem2016,quarta_sp2017,sridhar_smartgrid2012,shoukry2015,sampling_race2016,ghosttalk_2013}. 

%Our solution
 {\acronym}: In this work we propose a non-intrusive sensor fingerprinting method to authenticate sensors transmitting measurements to one or more PLCs. Device fingerprinting ideas based on clock skews, modulation schemes and transmission circuitry, have been reported  in the literature~\cite{kohno2005,boris2009,faria2006,remley2005}. However, sensors in an ICS are not computationally powerful enough to exhibit the above mentioned fingerprints~\cite{raheem2016}. Thus, we seek an answer to the question, \textit{Do sensors in a real world ICS have unique fingerprints?} It is known that hardware imperfections during the manufacturing process exhibit some unique physical behaviors that are useful for profiling and fingerprinting~\cite{dey-2014}. In particular, we observe that \emph{noise} (imperfections in measurements), an otherwise undesirable feature of sensors, strongly depends on such manufacturing imperfections. These variations affect each device differently and thus are hard to control or reproduce~\cite{gerdes2006}, making it challenging for an attacker to imitate sensor noise pattern. 
 %This difference in analog signal is the source of the generated fingerprint and hard to duplicate without rigorous reverse engineering.

A technique, referred to as \acronym, is designed to fingerprint sensors found in ICS. \acronym creates a fingerprint for a sensor based on a set of time domain and frequency domain features that are extracted from the sensor noise. A machine learning algorithm is used to distinguish an individual sensor from others. In particular, a  multi-class Support Vector Machine (SVM) is used  to identify each sensor from a dataset, comprising of a multitude of industrial sensors. Experiments were  performed on a total of $44$ sensors including $23$ low cost ultrasonic level sensors and $21$ sensors of different types in an  operational water treatment and distribution facility accessible for research\,\cite{swat2016,wadi2017}. Sensor identification accuracy is observed to be as high as $97\%$, with a low of $90\%$. It is also  shown that the proposed scheme is scalable for tens of sensors and that the sensor fingerprint is stable over time. True positive rate for sensor identification is observed to be  $100\%$ for most of the sensors and false positive rate as low as $0\%$.  The cost of actual industrial scale sensors is typically a few thousand dollars, which led us to design experiments for low cost sensors. However, experiments performed on various type of industrial sensors make it a representative study for a general ICS framework. 
%Convert to a paragraph + Add how it tackles a) physical b) analog spoofing and c) cyber.
% For any attack detection strategy to be a realistic solution for manufacturing, it should 1) not
%substantially interfere with the manufacturing system, 2) not require significant additional processing
%to be implemented, and 3) be relatively quick.

 \acronym  
has certain advantages that make it suitable for  deployment in an ICS: (a) It is \emph{non-Intrusive}, as no modifications in CPS hardware are required. (b)~It is a passive fingerprinting technique that identifies a sensor in an operational process without affecting its intended functionality.
(c)~It is a low cost solution that can be used at the design stage and also in an operational  CPS without any significant additional  cost. (d)~It does not require any functional modifications to the system or control logic, other than the addition of  specific code in a controller, such as in a PLC, to detect sensor  tampering. One of the strongest features of the proposed scheme is that, it is able to detect attacks originating from physical (analog) as well as cyber (digital) domains.

%To validate our approach, physical attacks on a realistic process plant were designed and executed. Measurement data from sensors was analyzed to find the noise fingerprint due to possible deviations in the manufacturing process. Machine learning techniques are applied to test new measurements against precomputed noise fingerprints. To the best of our knowledge, no prior work has applied sensor fingerprinting scheme to the detection of physical attacks on ICS.

The major contributions of this work are thus:
\begin{itemize}
\item A novel sensor fingerprinting framework  that is based on sensor noise, and is a function of hardware characteristics of a  device.
\item A detailed evaluation of the proposed fingerprinting method, for a class of invasive and non-invasive physical attacks.
\item Extensive empirical performance evaluation on realistic  testbeds as well as using controlled lab experiments.
\item A novel \emph{challenge-response} protocol based on sensor noise fingerprinting.
\end{itemize}
This work evaluates \acronym in the context of water treatment and water distribution testbeds~\cite{swat2016,wadi2017}. Commonly found industrial sensors are studied, but without loss of generality, these analysis are applicable   to other industrial applications.

%\noindent {\em Organization}: The remainder of this  article is organized as follows. Section\,\ref{threat_model_sec} is a detailed description about cyber and physical threat model.  Basic principles of sensors and their manufacturing imperfections as sources of measurement noise are discussed in Section\,\ref{sensing_technologies_sec}. The proposed fingerprinting technique and design of \acronym is explained in Section\,\ref{sensorprint_design_sec}. Section\,\ref{evaluation_sec} presents an analysis of the proposed schemes and the results obtained.  A novel \emph{challenge-response} protocol based on the physical quantity, is proposed in Section\,\ref{challenge_response_protocol_sec}. Work related to fingerprinting schemes for different applications is presented in Section\,\ref{related_work_sec}. Due to space limitations we included some useful background information on testbeds and experimentation setup in  Appendices. 

\section{Preliminaries}
\label{background}

\subsection{Threat Model}\label{threat_model_sec}
In this work, we consider specific physical and cyber attacks on sensor measurements in an ICS. First, we lay down our assumptions about the attacker, followed by justification for such assumptions. Attacker's goals, objectives and attack scenarios are also explained in detail.

\Paragraph{Assumptions on Attacker}\label{assumptions_on_Attacker}
We consider a  strong adversary who is able to launch cyber and/or physical attacks. In an ICS sensors, actuators and PLCs communicate with each other via communication networks. An attacker can compromise these communication links in a classic \emph{Man-in-The-Middle (MiTM)} attack~\cite{urbina_CCS2016limiting,SridharAdepu_AsiaCCS2016_L1Attacks,SAmin_2013_StealthyAtt_canal}, for example, by breaking into the link between sensors and PLCs. Recent studies, demonstrate malware attacks on PLCs~\cite{Anand_ESORICS2017_PLC_ladderlogicbomb,saman_PLCMalware_NDSS2017}. Besides false data injection in sensor readings via cyber domain, an adversary can also physically tamper a sensor, to drive a CPS into an unstable state. Therefore, we need to authenticate sensor measurements, which are transmitted to a controller. %Sensors can be connected  to remote \emph{input/output} units via wired and wireless connections. A cyber attacker can remotely, spoof sensor readings and does not need a physical access. 
A \emph{malicious insider}, is an attacker with physical access to the plant and thus to its devices such as sensors. An attacker can physically replace or tamper sensors. However, such an attacker does not necessarily, need to be an \emph{insider}, because critical infrastructures, e.g.,for  water and power, are generally distributed across large areas~\cite{raheem2016,sridhar_smartgrid2012}.  An  \emph{outsider}, e.g., end user, can also carry out a physical attack on sensors such as  smart energy monitors. Physical attacks, invasive and non-invasive, have also been considered as a threat model in traditional IT systems~\cite{physical_attacks_smartcard2005,Anderson1996}.

%\Paragraph{Physical Attacks, a New Threat for CPS}
%An important difference between Cyber Physical Systems (CPS) and traditional IT systems, is that, CPS have a physical space to secure besides cyber domain. In this context, an adversary can also launch an attack from the physical domain, such attacks are not studied in earlier cyber security research. Cyber physical systems are spread over vast geographical region, which makes physical attacks realistic. One recent study considered a power grid system spread over thousands of miles~\cite{raheem2016}. Other examples of such systems are water distribution systems~\cite{riccardo2017,laszka2017}. Physical security issues are highlighted as a \emph{unique challenge for CPS and extensively modeled as a threat}, in CPS security literature~\cite{ccs2016_3Dprinter_physicalAttacker,quarta_sp2017,Rocchetto2016,riccardo2017}.\\

\Paragraph{Attacker's Goals$(\mathcal{G})$}
An attacker may choose his  goals from a set of intentions\,\cite{sridhar2015} such as  performance degradation, disturbing a system property, and damaging a component.

\skipnoindent \textit{Physical Damage} $(\mathcal{G}1)$: An attacker can damage devices in a plant including pumping stations and other electrical appliances. Doing so may cause injury to individuals in the plant or at a large scale by altering the devices for instance by flooding the surroundings of the plant with wastewater as in the Maroochy Shire incident~\cite{slay_miller_2008}. 

\skipnoindent \textit{Reduction in Quality/Quantity of Product} $(\mathcal{G}2)$: An attacker can inject faults and defects in the product for a general industrial control system. In particular, for water networks, it can under-dose or over-dose certain chemicals to compromise the water quality. It can also reduce the production, resulting in water supply outages for the consumers.   

\skipnoindent \textit{Utility Theft} $(\mathcal{G}3)$: Utility theft is usually done by the end consumer. It can be achieved by tampering different kinds of monitors/sensors~\cite{McLaughlin2010}. One example is water theft from canals/irrigation systems~\cite{amin2013a,amin2013b}. 
%In such threat models, an adversary has full access to the hardware/sensing devices and is capable of launching a physical attack i.e., tampering with the sensors. In the case of \emph{MiTM} attack, a malicious entity can compromise link between a resource consumption monitor and central server and it can alter the monitor's readings. \\

\Paragraph{Attacker's Intermediary Objectives$(\mathcal{O})$}
\skipnoindent \textit{Concealment} $(\mathcal{O}1)$: An attacker wants to conceal the attack as much as possible~\cite{shoukry2015}. For instance, in certain  ICS~\cite{swat2016}, it has been observed that disconnecting the sensor from the controller or cutting the sensor wire to the controller, does not raise any alarm by default, making such an attack undetectable by default. In case of a stealthy cyber attack, an attacker manipulates the sensor measurements according to a precise mathematical model~\cite{SAmin_2013_StealthyAtt_canal,dan2010stealth,Carlos_Justin_CDC2016_stealthyAtt} to conceal it's actions. 
%This particular manipulation will change the sensor noise pattern, enabling \acronym, to detect the attack. More details on these attack scenarios are in the following sections. 

\skipnoindent \textit{Inaccurate Measurements and Signal Masking} $(\mathcal{O}2)$: Physically replaced malicious sensors should report an inaccurate and imprecise data to disturb the product quality. An attacker can achieve the same objective of inaccurate measurements by false data injection. Also, an adversary may prevent the sensor to measure the true physical quantity. This can be achieved by data injection in \emph{SCADA} workstation~\cite{urbina_CCS2016limiting,SridharAdepu_AsiaCCS2016_L1Attacks} and also by masking the true quantity by overpowering the original signal~\cite{shoukry2015,sampling_race2016}. 

\skipnoindent \textit{Deceiving Controller} $(\mathcal{O}3)$: The attacker attempts to deceive the controller into believing that the data received is from a legitimate sensor. \\% even though this is not the case. \\

\Paragraph{Attacker Characterization$(\mathcal{C})$}

\skipnoindent \textit{Attacker Profile} $(\mathcal{C}1)$:
Most likely attacker profile requiring physical access is that of a \emph{malicious insider}~\cite{Rocchetto2016}. However, for attacker goal \emph{$\mathcal{G}$3} (utility theft), a consumer may be a malicious entity. Another strong candidate for such an attacker is     \emph{casual outsider} (malicious contractor or technician) as defined in~\cite{quarta_sp2017}. Considering the critical nature of these utility infrastructures a nation state or a terrorist attacker profile could not be ignored~\cite{Rocchetto2016}. An \emph{outsider} can break into the system using cyber domain  and may not necessarily need physical access to the plant. 

\skipnoindent \textit{Technical Capabilities} $(\mathcal{C}2)$:
We assume an adversary, (\emph{insider or outsider}), has the complete knowledge of the workings of the plant  and sensing devices in particular. An end user could employ services of a contractor, a criminal entity, for sensor tampering, to achieve the utility theft goal(~\emph{$\mathcal{G}$3}). \\

% FIGURE T3

 %\begin{figure}[t]
%\centering
%\includegraphics[scale=.35]{figures/swat_stage3_27sep.eps}
%\vspace{-2em}
%\caption{Stages 1 and 3 of water treatment testbed.}
%\label{stage3}
%\end{figure} 

\Paragraph{Attack Scenarios $(\mathcal{S})$}\label{attack_scenarios}
We  categorize attack scenarios into two categories, namely physical domain and cyber domain.\\

\noindent\underline{Physical (Analog Domain) Attacks:}

 \skipnoindent \textit{Sensor Replacement Attack} $(\mathcal{S}1)$: An attacker replaces one or more sensors in the plant by new, perhaps malicious,  sensors.   This way the attacker can  manipulate the sensor readings to drive the system to an undesirable state. The attacker can inject arbitrary sensor  data that may not be detected by cyber attack detection schemes based on statistical methods such as Cumulative Sum (CUSUM)\,\cite{SAmin_2013_StealthyAtt_canal}.

\skipnoindent \textit{Sensor Swap Attack} $(\mathcal{S}2)$: Sensor swap attack uses the legitimate control logic to achieve an attacker's goal. Rather than modifying the control logic, an attacker feeds a controller with measurements from another process. The idea is to use the existing control logic to drive the plant to an insecure state. 
In appendix~\ref{swap_attack_example}, an example swap attack and it's consequences on a water treatment testbed are analyzed.

 \skipnoindent \textit{Sensor Saturation Attack}$(\mathcal{S}3)$: Sensor saturation attack is similar to jamming attacks and is executed by injecting power to the sensor's receiver~\cite{sensor_saturationAttack_infusionpump_usenix2016}. A constant value is measured by a sensor under attack and it is not able to monitor the real physical quantity. Another way to achieve a similar attack in industrial sensors is to block the physical medium (either by wireless channel jamming or cutting the wired media) between the sensor and remote \emph{I\O} unit. This attack also returns a constant reading to the \emph{SCADA} workstation. An attacker can remain stealthy for such attacks as system raises no alarm. However, it is obvious that transmitted constant reading does not contain sensor noise component which enables \acronym, to detect such attacks or faults in components. 
 
% FIGURE: Analog sensor spoofing
 
 \begin{figure}[!htb]
\centering
\includegraphics[scale=.35]{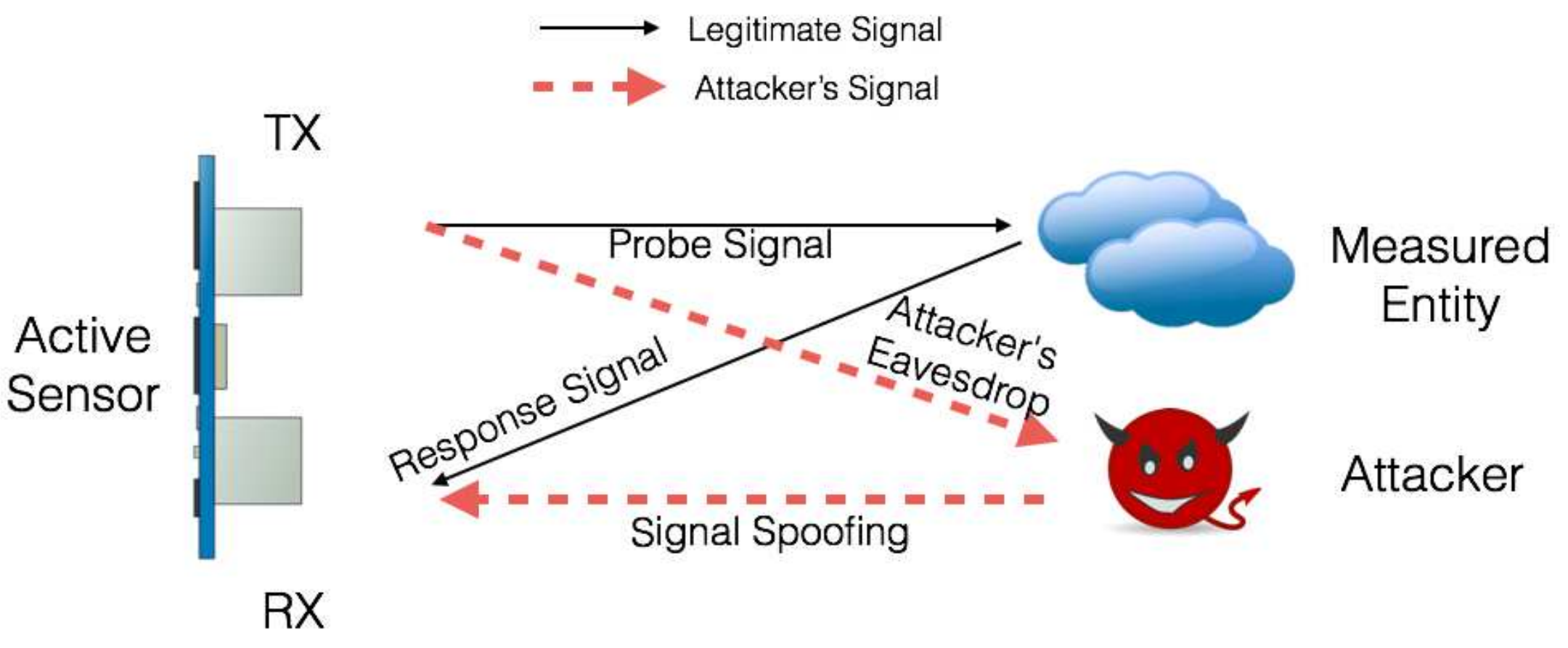}
\caption{Analog Sensor Spoofing Attack.}
\label{analog_sensor_spoofing}
\end{figure} 

\skipnoindent \textit{Analog Sensor Spoofing Attack}$(\mathcal{S}4)$: The attack scenarios ($\mathcal{S}1$,$\mathcal{S}2$,$\mathcal{S}3$), mentioned above,  need physical access to the plant and are also invasive requiring sensor tampering. Another type of physical attack can be non-invasive where an attacker spoofs sensor measurements by disturbing the surrounding environment. This can be achieved by bringing another malicious entity~(same type of device as the victim) in proximity with the victim device. One such attack is analog spoofing of the measured quantity~\cite{shoukry2015} for a class of active sensors. Active sensors transmit a probe signal which interacts with the quantity to be measured and a response signal is received and analyzed to measure the physical quantity. But if an attacker has physical access to the sensing environment, it can change the response signal before it reaches the receiver. An example of such an attacker is shown figure~\ref{analog_sensor_spoofing}. Transmitter (TX) of the active sensor transmits a probe signal and waits for the response signal to be received. Rather than getting the legitimate response signal, an attacker transmits a fake signal. When this fake signal is received at the receiver (RX) of the active sensor (victim), it would not be able to differentiate between a legitimate and the signal generated by an attacker.

 %Analog sensor spoofing is carried out in analog domain by changing the measurement environment as opposed to digital domain (after digitization of sensor measurements). Such analog signal spoofing could not be detected by the intrusion detection systems implemented in digital/cyber domain~\cite{shoukry2015}.

\skipnoindent\underline{Network-based (Cyber Domain) Attacks:}

\skipnoindent \textit{Digital Domain Sensor Swap Attack}$(\mathcal{S}5)$: This attack is similar in concept to $\mathcal{S}2$ but an attacker does not necessarily need  physical access to sensors. In control logic of the plant, it is possible to exchange the tags for the sensors for their respective PLCs~\cite{Anand_ESORICS2017_PLC_ladderlogicbomb}. $PLC_1$ can be made to read sensor~2, and $PLC_2$ can be made to read sensor~1. The effects of such a swap would be same as in the case of physical sensor swap. However, such a change will not be reflected at the  \emph{SCADA} workstation where human operators are monitoring the process 

\skipnoindent \textit{False Data Injection in Sensor Measurements}$(\mathcal{S}6)$: 
This attack can be executed as \emph{MiTM} whereby an attacker modifies the unencrypted sensor data transmitted to a PLC~\cite{SridharAdepu_AsiaCCS2016_L1Attacks,urbina_CCS2016limiting}. Such a modification will change the noise fingerprint of a sensor and  enable \acronym to detect the attack.  

\skipnoindent \textit{Stealthy Attacks}$(\mathcal{S}7)$: 
An attacker modifies a sensor measurement and attempts to hide it's presence. In the literature~\cite{Ahmed_AsiaCCS2017_stealthyAtt,Rizwan_ESORICS2017_stealthyAtt,CPSweek2016_stealthy_replayATT,Carlos_Justin_CDC2016_stealthyAtt,SAmin_2013_StealthyAtt_canal} specific stealthy attacks have been designed to stay undetected for a range of statistical detectors. To achieve the \emph{stealthiness} an attacker must modify sensor measurement in a way so that it can maximize the damage and remain undetected. To achieve this objective an attacker will inevitably change the sensor noise which will enable \acronym, to detect the presence of the attacker.

\skipnoindent \textit{Replay and Advanced Sensor Spoofing Attacks}$(\mathcal{S}8)$: 
In a replay attack an attacker records the system states for the normal operation of the system. Then it replays recorded states during an attack to hide itself from operators and digital domain intrusion detection systems~\cite{CPSweek2016_stealthy_replayATT,Mo2009_replayAttacks}. An example of such an attack is \emph{Stuxnet}~\cite{stuxnet}. A replay attack on sensor measurements might not be detected by \acronym because sensor noise will also be replayed. 
Another example is of a very powerful cyber attacker with an ability to learn noise fingerprint of a sensor. It can modify sensor measurements to arbitrary values and also add noise patterns for that sensor, making it strong enough to remain undetected by \acronym. We extend the idea of \acronym for the case of such a powerful cyber attacker by proposing a novel \emph{challenge-response} scheme  presented in section~\ref{challenge_response_protocol_sec}.

% Used Sensors Figure

\begin{figure}[!tb]
\centering
\includegraphics[scale=0.3]{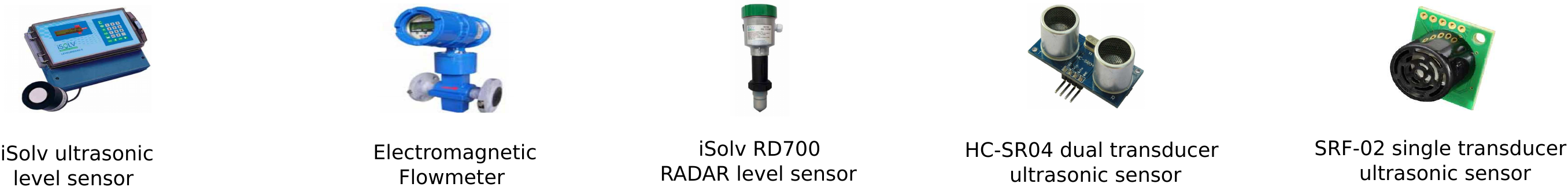}
\caption{Sensors used in experiments.}
\label{sensors_used}
\end{figure} 

\subsection{Sensing Technologies}\label{sensing_technologies_sec}
In this section we explain the basic working principle of the sensing technologies under study. This insight in sensor construction and functionality is an aid in understanding the sources of sensor noise and fingerprints. 
%For an interested, reader more detailed explanations are provided in appendix~\ref{sensing_technologies_appendix}. 

\Paragraph{Ultrasonic Level Sensors}\label{ultrasonic-technology}
Water treatment testbeds use ultrasonic sensors based on a piezoelectric (PZT ceramic) material transducer. The level of water in a tank is calculated by measuring the return time of the acoustic wave after hitting the water surface.  Several factors  contribute to variations in the measurements obtained from  ultrasonic sensors. These  measurements depend on the speed of sound which changes according to the surrounding temperature~\cite{jenny-2013}. Besides temperature, obstacles like tank walls reflect echo sooner than expected, contributing towards noise in the measurements. The acoustic impedance of the PZT transducers also depends on temperature thus adding another source of noise \cite{coutard-2005}. Thermal and polarization noise are the main sources of voltage fluctuation in piezoelectric ceramics~\cite{petr-2011}.    

\Paragraph{Microwave Level Sensors}
The microwave level/distance sensor, often called RADAR (Radio Distance and Ranging), works in a similar way as ultrasonic sensors. A microwave pulse is emitted by the antenna that travels at the speed of light and upon hitting the surface of the target it is reflected back and received at the same antenna. These antennae are designed to have a 50$\Omega$ resistance so that once connected with a cable of characteristic impedance of 50$\Omega$, maximum power transfer takes place from the antenna. The sensor under consideration is designed to operate at 26\,GHz with a beam angle of $22^o$ and 1$\mu$W effective radiated power\,\cite{flotech_radar}. However, in practice these specifications have deviation for the same type and design of an antenna due to manufacturing imperfections and installation inaccuracies. For example, antenna connection with a cable will result in impedance variations\,\cite{accuracy_radar}. Also, beam angle and radiation pattern  varies for each antenna leading to deviations from theoretical design resulting in different range resolution that is ultimately reflected in sensor noise \cite{antenna_pattern}.

\Paragraph{Electromagnetic Flow Meters}
The electromagnetic flow meters follow Faraday's law of induction according to which a voltage is induced by an electrically conductive fluid passing through a magnetic field. In an electromagnetic flow meter, the medium acts as the electrical conductor when flowing through the flow meter tube, and the induced voltage is proportional to the average flow velocity (the faster the flow rate, the higher the voltage). A commercial electromagnetic flow meter is shown in Figure~\ref{sensors_used} \cite{flotech_flowmeter}. 
It's internal structure consists of a pair of coils mounted on the top and bottom of an electrically insulated flow tube. A pair of electrodes protrude through the flow
tube wall perpendicular to the pipe axes and largely normal to the direction of the generated magnetic field. Noise in these sensor readings come from the area of the electrodes and size of the electro-magnets generating electromagnetic field $B$. The installation and alignment of electrodes and coils will result in different stray capacitance and noise \cite{flowmeter2006}.

\section{Design of \acronym}
\label{sensorprint_design_sec}
Figure\,\ref{sensor_print} shows the  steps involved in composing a sensor fingerprint. The proposed scheme begins with data collection  and then divides data into smaller chunks to extract a set of time domain and frequency domain features. Features are combined and labeled with a sensor ID. A machine learning algorithm  is used for sensor classification.

\subsection{Data Collection}
Data is collected for different types of industrial sensors listed in Table\,\ref{sensors_used_table}. We collect data for the level sensors when the process is static, i.e. tank levels are constant.  For  flow meters, data is collected when process is dynamic, i.e. water is flowing through the pipes and hence a non-zero flow rate is observed. The objective of data collection step is to extract sensor noise. For the case of level sensors, when the process is running, an error in sensor reading is a combination of sensor noise and process noise (water sloshing etc.). Extracting process noise from a dynamic process is a challenging task, given  that the noise parameters vary with change in the process state. Therefore, a set of experiments for level sensors, are  designed to obtain sensor measurements, when a process is  not active. Tanks  store water to provide to subsequent stages for processing. However,  these processes are not always active as water demand is not constant. If there is no inflow and outflow of water, in a tank, the corresponding level sensor measurement should be a constant.  Nevertheless, there are  fluctuations in the sensor measurements, as a result of sensor noise or temperature variations. In the controlled lab environment, temperature is controlled, and it affects all the sensors in a similar way. 
 
We have a limited number of level sensors in water treatment testbed, but we diversified experiments to validate the proposed idea. A total of $3$~level sensors are installed on top of $3$~water tanks. Each sensor is placed on all the three water tanks, to collect data for all possible sensor-tank combinations. The data is analyzed, in time and frequency domains, to examine the noise patterns, which are found to follow  Gaussian  distribution. Sensors are profiled using variance and other statistical  features in the noise vector. The experiment is run, to obtain  sensor profile, so that it can be used for later  testing. A machine learning algorithm is used to profile sensors from fresh readings (test-data). Design and testing, of \acronym, is feasible in settings of a water treatment plant. According to the control logic, when a tank is filled up to a specified limit, a pumping station turns OFF and the water level stays at a constant value. Since these water treatment systems, have multiple processing stages, there are instances when there is neither an incoming flow nor an outgoing flow from a tank, i.e. we have a constant water level in the tank. During these instances, we can record the data and match it with previously generated fingerprint. Fingerprints can be generated over time or at the commissioning phase of the plant. 

For  flow meters, if there is no water flow, we receive a value of $0$ from the sensor. In the testbeds under study, the flow of water between different stages is controlled by a pump. Therefore, water flow through a pipe should be constant but flow rate data transmitted by the electromagnetic sensor is noisy. We treat these fluctuations in the sensor data as noise and collect this information during the times when process is running. For the case of low cost ultrasonic sensors, we designed an experiment where a constant water level is to be measured. Noise in each sensor is extracted and data  analyzed to generate the fingerprint. Thus, by combining the level sensor and flow meter features, we are able to monitor the process, in both static (constant water level) and dynamic (water flowing) states. 

 % FIGURE Design of SensorPrint
 
 \begin{figure}[!htb]
\centering
\includegraphics[scale=.35]{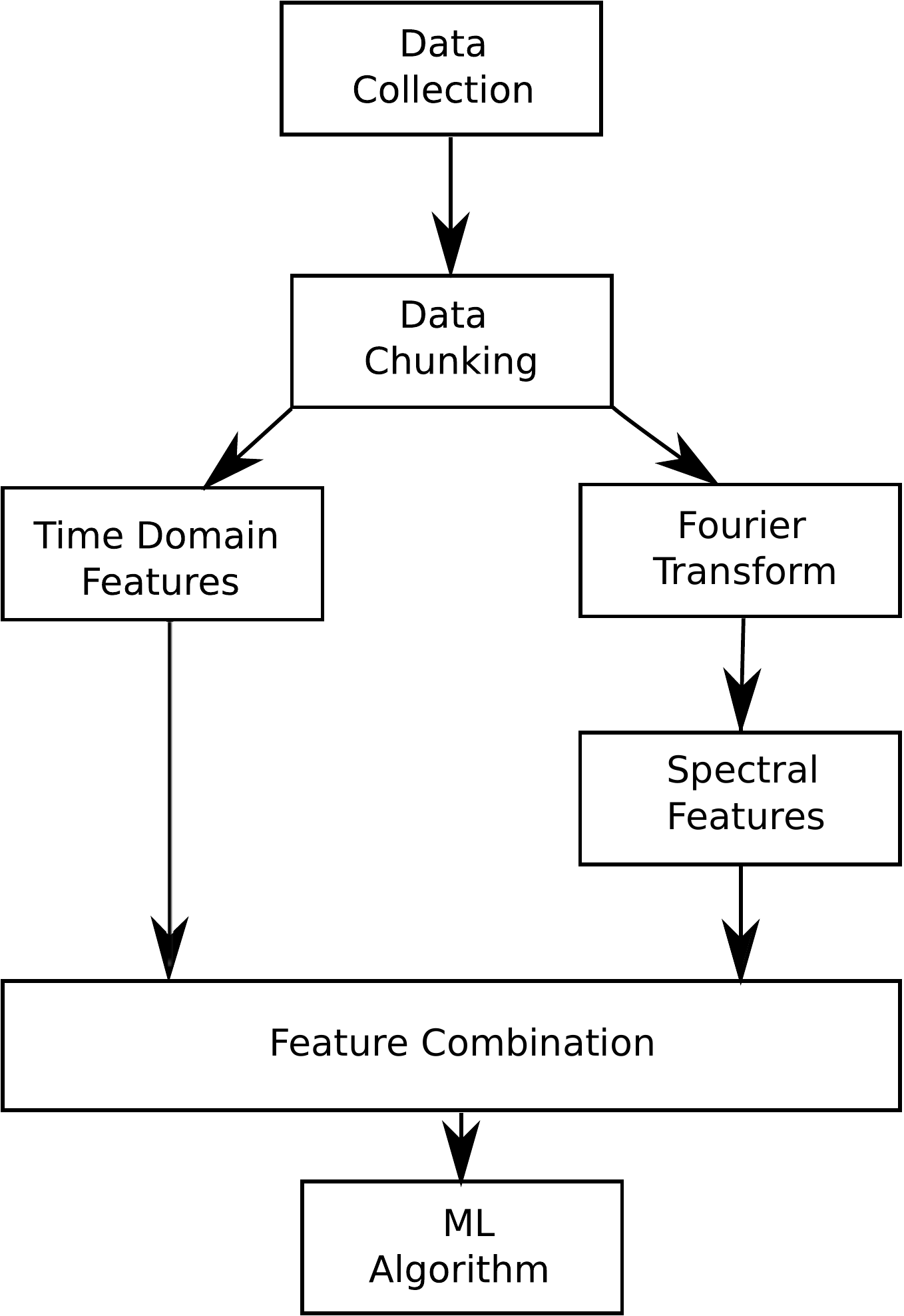}
\caption{Components of  \acronym, design.}
\label{sensor_print}
\end{figure}  

% Features - Table
\begin{table}
\begin{center}
\caption{List of features used.  }
\label{features}
 \begin{tabular}[!htb]{|l | l|} 
 \hline
 {\bf Feature} & {\bf Description}  \\ %[0.5ex] 
 \hline
 Mean & $\bar{x} = \frac{1}{N} \sum_{i=1}^{N} x_i$  \\ 
 \hline
 Std-Dev & $\sigma = \sqrt[]{\frac{1}{N-1}\sum_{i=1}^{N}(x_i - \bar{x}_i)^2}$  \\
 \hline
 Mean Avg. Dev & $D_{\bar{x}} = \frac{1}{N}\sum_{i=1}^{N} |x_i - \bar{x}|$  \\
 \hline
 Skewness & $\gamma = \frac{1}{N}\sum_{i=1}^{N}(\frac{x_i - \bar{x}}{\sigma})^3$  \\
 \hline
 Kurtosis & $\beta = \frac{1}{N} \sum_{i=1}^{N}(\frac{x_i - \bar{x}}{\sigma})^4 - 3$  \\
 \hline
 Spec. Std-Dev & $\sigma_s = \sqrt[]{\frac{\sum_{i=1}^{N}(y_f(i)^2)*y_m(i)}{\sum_{i=1}^{N}y_m(i)}}$  \\
 \hline
 Spec. Centroid & $C_s = \frac{\sum_{i=1}^{N}(y_f(i))*y_m(i)}{\sum_{i=1}^{N}y_m(i)}$  \\
 \hline
 DC Component & $y_m(0)$  \\ [1ex] 
 \hline
 \multicolumn{2}{p{0.45\textwidth}}{Vector $x$ is time domain data from the sensor for $N$ elements in the data chunk. Vector $y$ is the frequency domain feature of sensor data. $y_f$ is the vector of bin frequencies and $y_m$ is the magnitude of the frequency coefficients.} 
\end{tabular}
\end{center}
\end{table}

\subsection{Feature Extraction}
Data is collected from sensors at a sampling rate of one every second. Since data is collected over time, we can use raw data to extract time domain features. We used the Fast Fourier Transform (FFT) algorithm\,\cite{welch1967} to convert data to frequency domain and  extract the spectral features. In total, as in  Table\,\ref{features}, eight features are used to construct the fingerprint. 

\noindent \emph{Data Chunking}: After data collection from the sensors, the next step is to create chunks of dataset. An important question to answer is:  \emph{How many chunks of data are needed to train a well-performing machine learning model?} Also, we want to know, \emph{how much time it will take to perform a test?} so that a decision about the attacks can be made within a specified time. To this end we create data chunks of varying size for each sensor. After dividing a sensor's data (total of $N$ readings) into $m$ chunks, each chunk of $\floor{\frac{N}{m}}$,  the feature set $<F(C_i)>$ for each data chunk $i$ is calculated For each sensor, there are $m$ sets of features $<F(C_i)>_{i \in [1,m]}$. For $n$ sensors one can use $n \times m$ sets of features to train the multi-class SVM. Supervised learning was used  for sensor identification which has two phases-- training and testing. For both the phases, chunks are created in a similar way as explained above. 

\noindent \emph{Size of Training and Testing Dataset}: An important question to address is:  \emph{How many feature sets are needed for training the classifier and how many for testing?} For a total of $m$ feature sets for each sensor, at first  half ($\frac{m}{2}$) were used for training and half ($\frac{m}{2}$) for testing. To analyze the accuracy of the classifier for smaller feature sets was reduced during the training phase, number of feature sets starting with $\frac{m}{2}$. 
Classification is then carried out for the following corresponding range of feature sets for $\mbox{Training}: \{\frac{m}{2}, \frac{m}{3}, \frac{m}{4}, \frac{m}{5}, \frac{m}{10} \}$, and for $\mbox{Testing}: \{ \frac{m}{2}, \frac{2m}{3}, \frac{3m}{4}, \frac{4m}{5}, \frac{9m}{10} \}$, respectively. In section~\ref{evaluation_sec}, empirical results are presented for such feature set and the one with the best performance is chosen for further analysis of the proposed scheme. For the classifier a multi-class SVM is used~\cite{libsvm}, as briefly described in Appendix~\ref{sec:svm}.

\subsection{Performance Metrics}
In Table~\ref{sensors_used_table},  sensors of same type are grouped together. Each sensor is assigned a unique ID and a multi-class classification is applied to identify each sensor among those of the same type and model group. To evaluate the performance,  identification accuracy is used as a performance metric. Let $c$ be the total number of classes.  $TP_i$ as true positive for class $c_i$ when it is rightly classified based on the ground truth. False negative $FN_i$ is defined as the wrongly rejected, and false positive $FP_i$ as wrongly accepted, device. True negative $TN_i$ is the rightly rejected class. The overall accuracy ($acc$) for total of $c$ classes is defined as in the following. 

\begin{equation}
acc = \frac{\sum_{i=1}^{c} TP_i+\sum_{i=1}^{c}TN_i}{\sum_{i=1}^{c}TP_i+\sum_{i=1}^{c}TN_i+\sum_{i=1}^{c}FP_i+\sum_{i=1}^{c}FN_i}.
\end{equation}

\noindent The True Positive Rate (TPR) and False Positive Rate (FPR) are  defined as follows,

\begin{equation}\label{tpr}
\mbox{True Positive Rate (TPR)} = \frac{TP}{TP + FN},
\end{equation}
\begin{equation}\label{fpr}
\mbox{False Positive Rate (FPR)} = \frac{FP}{FP + TN}.
\end{equation}

\iffalse

 % Table_ Definition of Test Statistics

\begin{table}[ht!]
\centering
\caption{Definitions for the performance metrics.}
 \begin{tabular}{||c c c||} 
 \hline
   & \hspace{2.5cm}Actual & \\ [0.5ex] 
\hline
 Predicted & Sensor 1 & Sensor 2 \\ [0.5ex] 
 \hline\hline
 Sensor 1 & True Positive (TP)  &  False Negative (FN)  \\ 
 \hline
 Sensor 2 & False Positive (FP)  &  True Negative (TN)   \\ [1ex]
 \hline
 
\end{tabular}
\label{definitions}
\end{table}

 \fi

% Device - Table
\begin{table}
\begin{center}
\caption{List of sensors in our study.}
\label{sensors_used_table}
 \noindent\begin{tabularx}{\linewidth}{|X | c | c|} 
 \hline
 Type & No. of Devices & Class \\ %[0.5ex] 
 \hline
 Ultrasonic level sensor & 3 & C1  \\ 
 \hline
 RADAR level sensor & 2 & C2  \\
 \hline
 HCSR-04 dual transducer 5V & 15 & C3  \\
 \hline
 HCSR-04 dual transducer 3.3V & 5 & C4 \\ 
 \hline
  Electromagnetic Flow Meter & 8 & C5  \\
 \hline
 SRF02 single transducer Sonar & 3 & C6 \\ %[1ex] 
 \hline
 Differential Pressure Transmitter & 8 & C7 \\
 \hline

\end{tabularx}
\end{center}
\end{table}

 % FIGURE system model
 \begin{figure}[!htb]
\centering
\includegraphics[scale=.35]{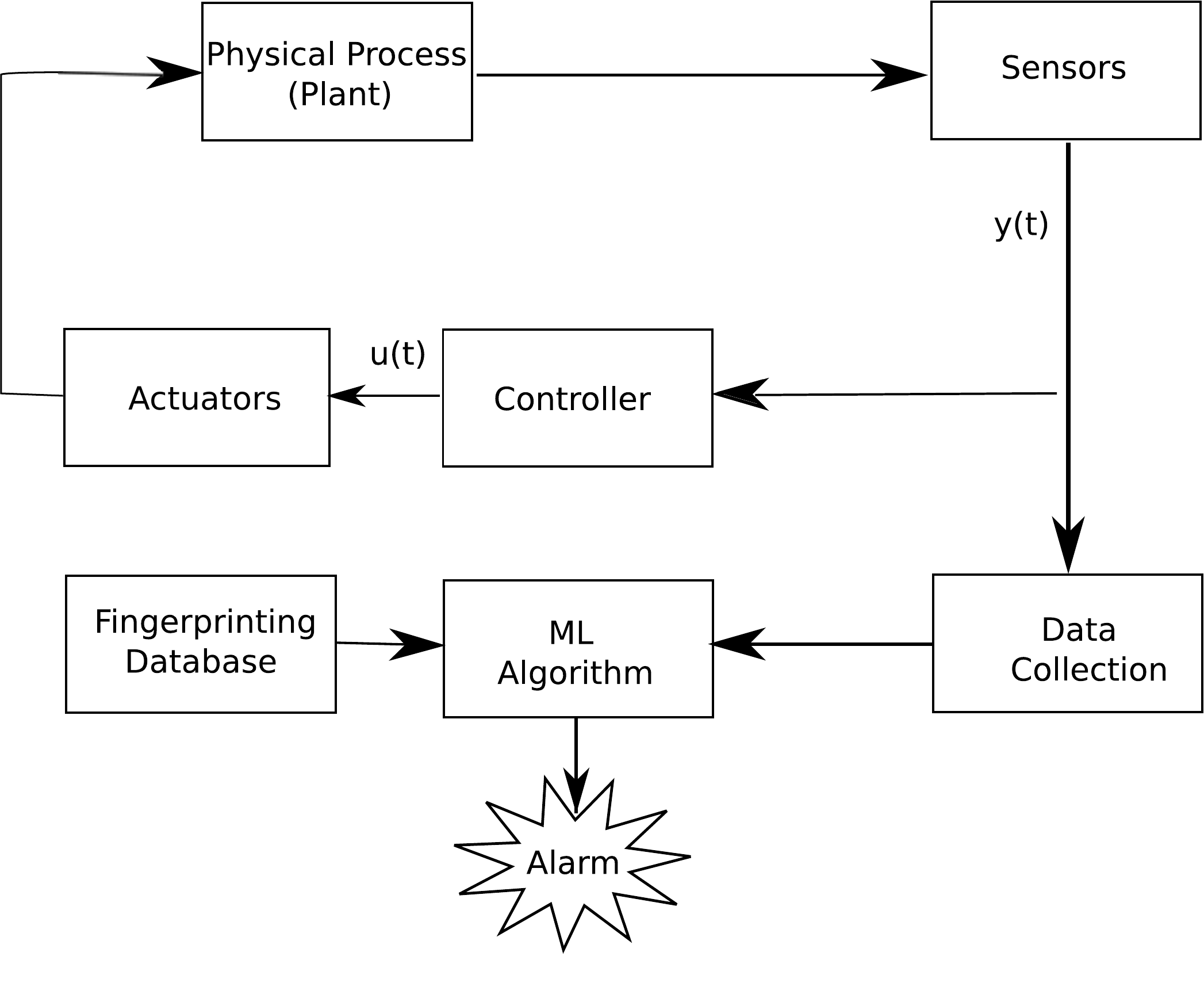}
\caption{System Model for Proposed Methodology.}
\label{system-model}
\end{figure} 

\section{Evaluation}\label{evaluation_sec}
The  underlying idea  for the proposed method is to find a fingerprint from the extracted noise of sensors under consideration. Figure~\ref{system-model} shows the integration of the proposed scheme to any existing ICS without need of additional hardware installation. Also, the proposed method is passive and does not disrupt the functionality of the control system. 

% Mujeeb: How did you validate the statement above?
% To Prof. Aditya: What I meant was that, unlike other challenge response or physical watermarking schemes, we do not add any noise in control or do not interrupt control signals while issuing a challenge, so it's passive, just listening to data and making decisions. 

\noindent\textbf{Research Questions:}
The following  research questions are the focus of the remainder of this work.

\begin{itemize}

\item [\bf RQ1:] \emph{Does a unique fingerprint exist for each sensor?}

\item [\bf RQ2:] \emph{How much data is needed to identify a sensor?}

\item [\bf RQ3:] \emph{How accurately can the sensors be identified?}

\item [\bf RQ4:] \emph{Is a fingerprint  unique with respect to multiple  sensors such as is the case in a large CPS?}

\item [\bf RQ5:] \emph{Is the sensor fingerprint stable between different runs of the experiment over a period of time?}

\item [\bf RQ6:] \emph{Is the sensor fingerprint based method able to detect analog sensor spoofing attacks?}

\end{itemize}

\noindent Sensor data is analyzed for different sets of sensors to answer the above questions. Extensive data collection  and analysis is carried out to obtain  the fingerprint and show that indeed this is a valid fingerprint. The following sections describe    the experimentation setup as well as results obtained by applying the proposed attack detection scheme.  

\subsection{Experimentation Setup}
Experiments described in this article were conducted on an operational   water distribution and water treatment systems. One  testbed is a fully operational research facility and a  scaled down water treatment plant. The other testbed is a water distribution network\,\cite{swat2016,wadi2017}. Additional information on these testbeds is in  the appendices~\ref{swat-testbed} and~\ref{wadi-testbed}. 
 Besides using the sensors in the two testbeds,  23 small ultrasonic sensors were also used to show that the proposed scheme can scale well to a larger set  of sensors. Table~\ref{sensors_used_table} lists  the  sensors used in this study. 
These sensors are representative of those found in a range of industrial plants as fluid storage and flow is a common process~\cite{bela-2003}.

% FIGURE Evidence of Fingerprint
 
 \begin{figure*}[!htb]
\centering
\includegraphics[width=18cm, height=4.5cm]{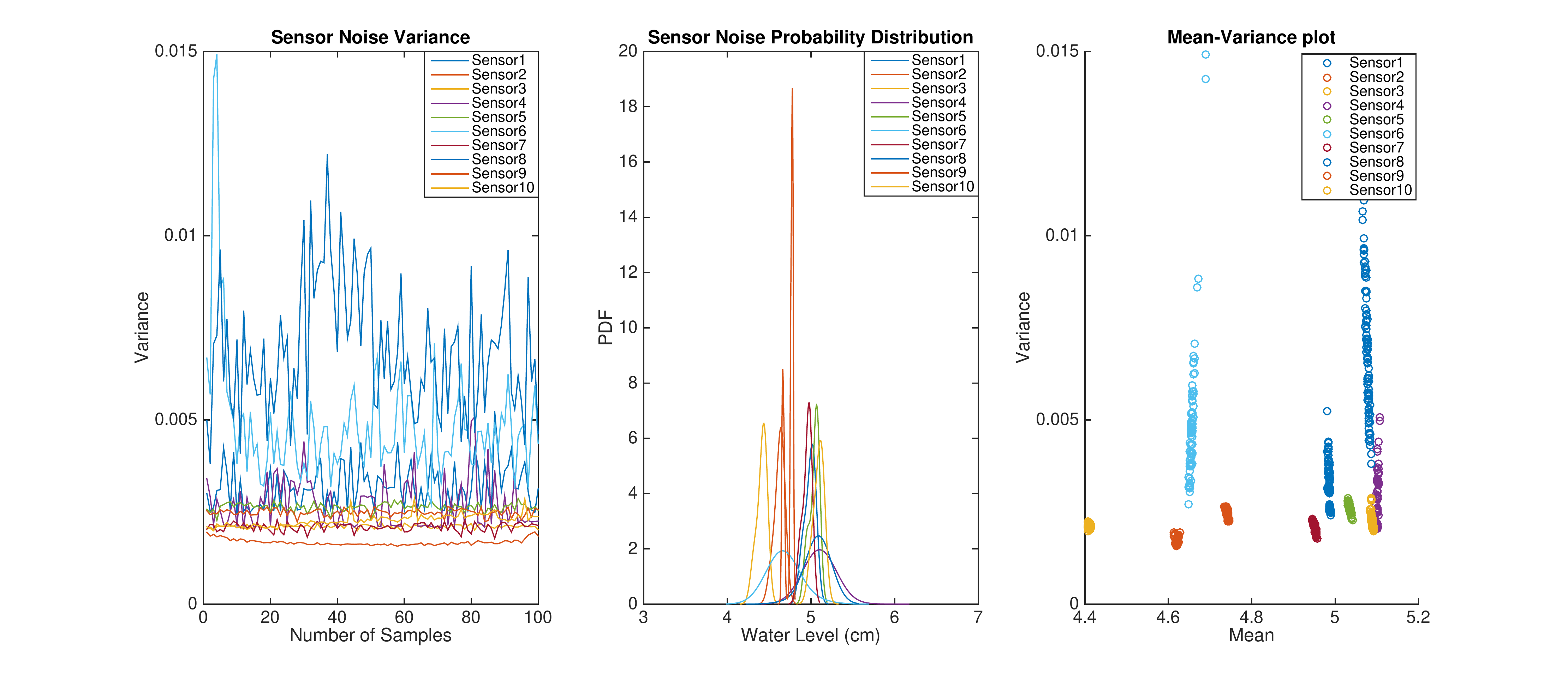}
\vspace{-2em}
\caption{Noise data from ten ultrasonic sensors of same type (HC~SR04). Left: Variance for each sensor noise vector is shown for different data chunks of each sensor. Middle: Distribution of each sensor noise vector is shown for a data chunk. Right: Including mean of the noise improves the separation for individual sensors.}
\label{variance_plot}
\end{figure*} 

\Paragraph{Ultrasonic Level Sensors (Water Treatment Testbed)}
Experiments were performed on a portion of the six stage water treatment testbed~\cite{swat2016}. Three ultrasonic level sensors, available in this testbed,  were used in the experiments.  Data was collected first from three level sensors in their original tank installation and also by mounting on the other tanks. These experiments were  performed for several hours each day and over several months. An SVM model was trained over the data obtained in original positions of the sensors. For each sensor, testing data was obtained by placing the sensor over the other tanks. This experiment demonstrated: a)~changing the process (tank) does not change the  fingerprint, and b)~fingerprint is stable and valid for long term, e.g., it is valid for data collected at different times over a period of six months.

% Mujeeb: "few months"? Can you be more specific?
% To Prof. Aditya: Thanks for pointing out. I have fixed it.

 % FIGURE Variance plot _Existence of Fingerprint
 
 \begin{figure}[!htb]
\centering
\includegraphics[scale=.35,keepaspectratio]{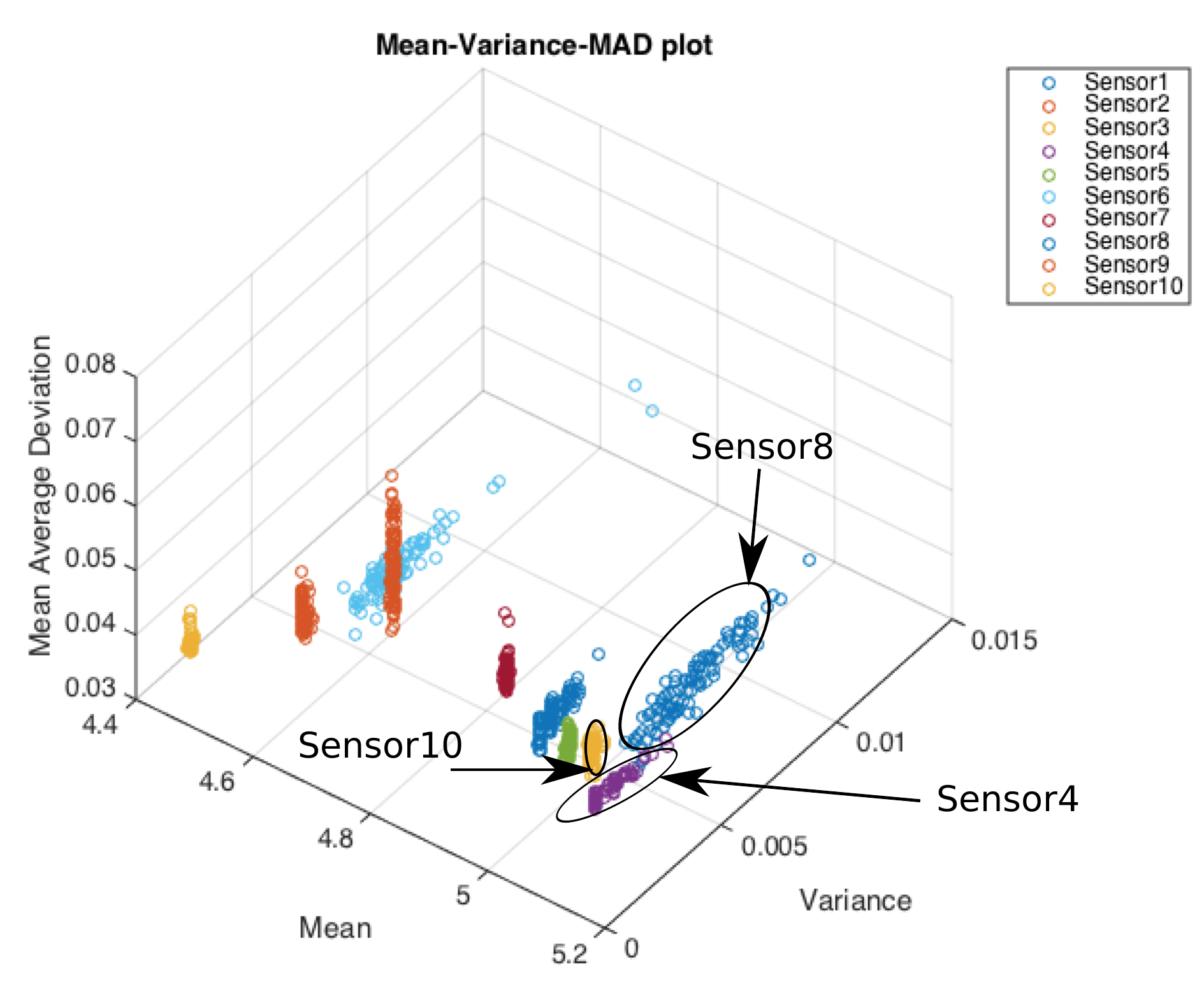}

\caption{Three features have improved the clustering and unique identification of the sensors as compared to  using only the variance or the mean. This is a 3D representation of the right-most plot in Figure~\ref{variance_plot}.}
\label{mad}
\end{figure} 

\Paragraph{Flow Meters (Water Treatment Testbed)}
Electromagnetic flow meters are installed in pipes to monitor the water flow. Data was collected while the plant was running. For flow meters, sensor swap is not as simple as level sensors, since these are installed in-line on a water pipe. Therefore, a 5-fold cross validation results are reported in Table~\ref{all_sensors_accuracy} for the four flow meters (from each testbed) used in the study.   Several measurement vectors were obtained for the water filling process for up to six days continuous run of the plant. The noise vector for each process was extracted and feature set  obtained.

\Paragraph{RADAR Level Sensors (Water Distribution Testbed)}
Two radio frequency based level sensors are available for experiments in the water distribution testbed\,\cite{wadi2017}. Experiments were  performed for three days and data collected. A 5-fold cross validation was performed using SVM and sensors identified with high accuracy.

\Paragraph{Dual Transducer Ultrasonic Sensors (HCSR04)}
 A limited number of industrial sensors were available to perform the experiments. Details of such sensors are given above. It is costly to acquire a lot of such sensors. Therefore, a set of experiments was designed to test our approach on $20$ low cost ultrasonic sensors. The working principle of these low cost sensors is the same as for those used in industrial scale sensors. The goal of these experiments was to explore \emph{whether  a  large set of sensors can  be uniquely fingerprinted and identified  with high accuracy}? Among these $20$ sensors,  $15$ are double transducer ultrasonic sensors which require $5V$~DC to operate, while the remaining $5$ sensors require $3.3V$ DC to operate.  This variety of operation voltages adds to the diversity in the experiments.  

The setup to conduct the experiments on ultrasonic sensors is shown in Figure~\ref{hcsr04}(Appendix~\ref{supporting_figures_appendix}). It is composed of the following main components: 1)~a tank is created with a small half-filled water glass, 2)~breadboard to hold the ultrasonic sensor, 3)~Arduino board (controller) that is connected to sensor, and 4)~a server to collect and store data from sensors over WiFi. The Arduino board has a microcontroller and several input/output pins that serve as an  interface for external circuitry\,\cite{arduino}. 
%WiFi enabled  Arduino boards were used  offering two options for data transfer to the server: 1)~USB interface via serial communication and 2)~connect to server with on board WiFi. Connections between the sensor and Arduino board were made using the jumper wires. 
Ultrasonic sensors can be easily  mounted and removed from the breadboard. Same water level is ensured in the tank for all the sensors. Multiple rounds of experiments were performed to ensure that noise is free of the physical sensor arrangement. Sensor readings were collected over three hours with a sampling time of 1~second. Several chunks of data were extracted from these sets of readings and features  extracted for each chunk. After collecting  data for three hours we removed the sensor from the tank and put it back and collected data for an additional two  hours. This is to demonstrate that the fingerprint can be recovered even after disturbing the sensor alignment. 

\Paragraph{Single Transducer Sonar Sensors (SRF02)}
Three  small single transducer sonar range finders were also included in the experiments. The basic working principle of these sensors  is the same as that of dual transducer ultrasonic sensors though with minor differences. These sensors (SRF02) use a single transducer for both transmission and reception, and the minimum range is higher than the dual transducer sensors used in the experiments. The minimum measurement range varies from around 17-18\,cm. For this reason  a distance measuring experiment was performed with these three sensors rather than measuring water level in small glass tanks. These experiments were controlled through the  Arduino board and placed at the same place to measure distance between sensors and the ceiling. Experiments were run for $6$ hours with one run of $3$ hours and a second run of $3$ hours to explore the stability of the sensor fingerprint.

\subsection{Existence of Fingerprint}
\noindent \textbf{RQ1}: \emph{Does a unique fingerprint exist for each sensor?} A  limited number of sensors were available in the water utility testbeds. Hence,    additional  low cost ultrasonic sensors are included to explore the existence of fingerprints for many sensors of the same type and model. To demonstrate  the existence of fingerprint,  ten dual transducer ultrasonic sensors (HCSR04) from the same manufacturer were used. All  ten sensors were mounted on the same water tank. Data was collected for 3~hours and  many chunks of the collected data  taken for analysis. Each chunk consists of 300 readings from the sensor. Figure~\ref{variance_plot} shows results for the collected data. The plot on the left  shows the variance of noise vector from each sensor for all  chunks. It is  observed that some of these sensors have a unique noise variance and can be distinguished from each other but there remain few sensors that have similar noise pattern in terms of noise variance. The  middle pane is a plot of the distribution of the noise vector from each sensor. It also shows that sensors can be distinguished based on noise statistics. However, there remain  overlaps among some  sensors. Right pane shows 2-D clustering of the sensors. Sensors can be distinguished more precisely by using one more feature of sensor's noise i.e. mean value. The scatter plot on the right hand side clusters each chunk with its respective mean and variance. The separation is quite clear but  there remain  overlaps, e.g.,  sensor4, sensor8 and sensor10. We need additional features to further eliminate such overlaps. In Figure\,\ref{mad}, by adding one more feature, i.e.  mean average deviation, sensor4, sensor8 and sensor10 can be distinguished. In the following sections we show that by using additional  features it is possible to achieve high accuracy for sensor identification.  Details of feature set used are in Table\,\ref{features}.

% Table_ FPR, TPR etc:

\begin{table*}[ht!]
\centering
\caption{Identification evaluation for each sensor. \textbf{FPR: }False Positive Rate, \textbf{TPR:} True Positive Rate.}
\begin{adjustbox}{max width=\textwidth}
{\footnotesize
 \begin{tabular}{|c c c c c c c c c c c c c c c c c c c c c|} 
 \hline
  & S1 & S2 & S3 & S4 & S5 & S6 & S7 & S8 & S9 & S10 & S11 & S12 & S13 & S14 & S15 & S16 & S17 & S18 & S19 & S20\\ %[0.5ex] 
 \hline
 TPR & 1  &  0.87  &  1  &  1  &  1  &  0.97 &   1  &  1  &  0.99 & 0.98  &  1  &  0.98  &  0.99 &   0.98  &  0.86  &  0.97  &  1  &  0.94  & 0.97 &   1\\ 
 \hline
 FPR & $3.4e^{-3}$   & $0.9e^{-3}$   &      0   &      0  &       0   &      0    &     0     &    0    &     0  & $6.2e^{-3}$ &        0 &   $1.2e^{-3}$ &   $2.2e^{-3}$  &  $1.2e^{-3}$   & $7.4e^{-3}$  &  $2.2e^{-3}$   &      0  &  $0.6e^{-3}$ & 0    &     0 \\ [1ex] 
 \hline 
\end{tabular} }
\end{adjustbox}

\label{detection_mean}
\end{table*}

% Accuracy 20 sensors - Table
\begin{table*}
\begin{center}
\caption{Classification accuracy for 20 small ultrasonic sensors. 
%Results are shown for different training and testing set and also for different sample size in seconds. 
$m$ is the total number of readings recorded from each each sensor. $\frac{m}{3},\frac{2m}{3}$ means one third of the data is used for training and two third is used for testing respectively. 
%These data sets represent data collected for few hours of experiments. 
%We obtain a set of features for each sensor by creating many samples of the data. 
The first column in the table shows chunk size of data used to extract features.}
\label{20sensors_accuracy}
 \begin{tabular}[!htb]{|c | c | c | c | c | c|} 
 \hline
 Chunk Size (s) / Data Segmentation & $\frac{m}{2},\frac{m}{2}$ & \textbf{$\frac{m}{3},\frac{2m}{3}$} & $\frac{m}{4},\frac{3m}{4}$ & $\frac{m}{5},\frac{4m}{5}$ & $\frac{m}{10},\frac{9m}{10}$   \\ %[0.5ex] 
 \hline
 60 & 95.13\% & 95.56\% & 95.14\% & 93.56\% & 88.52\%  \\ 
 \hline
 \textbf{120} & 96.43\% & \textbf{96.43\%} & 95.6\% & 93.86\% & 92.61\%   \\
 \hline
 250 & 97\% & 95.82\% & 94.74\% & 94.37\% & 91.62\% \\
 \hline
 500 & 96.72\% & 95.59\% & 95.20\% & 94.13\% & 85.09\%  \\
 \hline
 700 & 96.74\% & 95.66\% & 91.51\% & 90.55\% & 77.08\% \\
 \hline
 1000 & 96.88\% & 93.25\% & 90.62\% & 83.75\% & 69.25\%  \\ [1ex] 
 \hline

\end{tabular}
\end{center}
\end{table*}

% Accuracy all sensors
\begin{table*}
\begin{center}
\caption{Overall Result}
\label{all_sensors_accuracy}
 \begin{tabular}[!htb]{|c | c | c | c|} 
 \hline
 Sensor  &  Type and Model & Number of Sensors & Identification Accuracy  \\ %[0.5ex] 
 \hline
 Ultrasonic Level Sensor & iSOLV LevelWizard II & 3 & 90\% \\ 
 \hline
 Electromagnetic Flowmeter (SWaT) & iSOLV EFS803/CFT183 & 4 & 96\%  \\
 \hline
 Dual Transducer Ultrasonic Level Sensor & HC-SR04 (5V) & 15 & 97.65\% \\
 \hline
 Dual Transducer Ultrasonic Level Sensor version 2 & HC-SR04 (3~5V) & 5 & 97.36\% \\ 
 \hline
 Sonar: Single Transducer Range Finder & SRF02 & 3 & 90\%\\
 \hline
  Electromagnetic Flowmeter (WaDI) & iSOLV EFS803/CFT183 & 4 & 98.2\%  \\
 \hline
 Differential Pressure Transmitter & iSOLV SPT 200 & 8 & 92.5\% \\
 \hline
 RADAR: RF Level Sensor & iSOLV RD700 & 2 & 99\%\\ [1ex] 
 \hline

\end{tabular}
\end{center}
\end{table*}

\subsection{Sensor Identification Accuracy}
\noindent \textbf{RQ2}: \emph{How much data do we need to identify a sensor}? We start our analysis with $20$ dual transducer small ultrasonic sensors. The goal now is to find  how much training data is adequate to identify sensors with high accuracy. Also, we are interested in knowing the chunk size of data to extract features. Table~\ref{20sensors_accuracy} shows results from our analysis. The first row shows the size of training and  testing data set, respectively, where $m$ is total number of chunks for a sensor. The first column in the table shows chunk size starting from $60$~seconds to $1000$~seconds. It is  observed that too small a chunk size does the accuracy is slightly lower as compared to chunks in the middle range. A large chunk size,  as for example $1000$, limits the chunks and hence leads to a small a feature set. Doing so results in lower identification accuracy. If we move towards right in the row, the size of the training data set is reducing, which also results in lower accuracy. This result is intuitive. However, we select a chunk size of $120$~seconds which is not too small that it could not capture sensor noise statistics and not too large that  we have to wait too long before reaching a decision about authenticity of the sensor. Also, it was  decided to divide the entire sensor data set  in three parts and used one third for training for further experiments. For $20$ low cost ultrasonic sensors, this choice of chunk size and training data set provided  $96.43\%$ accuracy for sensor identification as shown in  Table~\ref{20sensors_accuracy}.

For a chunk size of $120$~seconds, and one third of data set for training,  the SVM classifier,  a multi-class SVM classification for $20$ dual transducer ultrasonic sensors was carried out. It was possible to distinguish   sensors with an accuracy of $96.43\%$. Figure~\ref{confusion_matrix} (Appendix D) shows the same result visually with a plot of the confusion matrix as a heat map for these $20$ sensors. The horizontal axis represents actual sensor ID, while the vertical axis represents the predicted sensor ID by the SVM classifier. We note  that nearly all  sensors are accurately identified, and hence, on diagonal of the confusion matrix, the  prediction accuracy is close to $100\%$. 

Table~\ref{detection_mean} shows the TPR and FPR  for each of the sensors. Eq.\,\ref{tpr}, gives us the percentage of rightly classified sensor (TPR), while Eq.\,\ref{fpr} gives the percentage of mis-classification of sensor (FPR). Ideally,  $FPR = 0$ and $TPR = 1$ for  perfect classification. Table~\ref{detection_mean} shows the results for $20$ ultrasonic sensors labeled as $S1, S2,...,S20$, where the TPR is $100\%$ and FPR is $0\%$ for nearly all sensors.  

\skipnoindent  \textbf{RQ3}: \emph{How accurately can we identify sensors}? Table~\ref{all_sensors_accuracy} shows  the sensor identification accuracy for all  $44$ sensors used in our study. This table lists the type and model of sensors, number of sensors, and identification accuracy obtained in the experiments.  It is noted that   the lowest identification accuracy is $90\%$ for $5$ sensors and, the remaining  $27$  sensors have identification accuracy above $97\%$ for their sensor type, respectively. These results highlight the significance of  \acronym.

\subsection{Scalability of the \acronym}
\noindent \textbf{RQ4}: {\em Is a fingerprint  unique against many  sensors such as is the case  in large plants?} Data from the  experiments reported here does not   provide a clear answer to this question  as we had limited sensors in the lab to test the idea of fingerprinting. We also do not have  any theoretical proof of  scalability. However, we experimented on tens of small ultrasonic sensors, which gives an indication of scalability of \acronym. The experiment was designed using a  set of $20$ dual transducer ultrasonic sensors.  $5$ sensors were selected randomly  out of these $20$ sensors and  identification accuracy calculated using SVM. Then,  $10$ sensors were selected randomly, followed by $15$ and ultimately all $20$ sensors were selected. The idea is to explore if identification accuracy drops as  the number of sensors to be identified is increased. The results are shown in Table~\ref{5to20sensors_accuracy}. Identification accuracy does not drop by increasing the number of sensors in the identification pool. This result is intuitive as the idea of \acronym results from the assumption that each device has unique noise characteristics due to manufacturing imperfections. By increasing the number of devices, the confusion for a classifier does not necessarily increase as each device brings it's own unique fingerprint. This shows that the idea of \acronym is scalable for a larger number of sensors.  

\subsection{Stability of  \acronym}
\noindent \textbf{RQ5}: {\em Is the sensor fingerprint stable between different runs of the experiment over a period of time? } Several experiments were performed to answer this question.  Data  collection phase for the level sensors installed in the water treatment testbed was spread over a period of six months. Results shown in Table~\ref{all_sensors_accuracy}, for the case of \textit{iSOLV Level Wizard\,2(level sensor in SWaT)} are representative of this experiment. From these results we  conclude that sensor fingerprints are stable over long periods of time (i.e. six months in our study). 

Besides collecting data over long periods of times on the  water treatment testbed, another experiment was designed  to test with low cost ultrasonic sensors. To see if the sensor fingerprint is stable between different runs of the \acronym, data  was collected for $3$ hours for each sensor and then remove those from the tank and Arduino board (we refer to it as  "first run"). Then, in the second run,  the sensor was placed  back  on the tank and connected  with the Arduino board.  Features from the collected data are compared across in two runs. A visual representation of this experiment is shown in Figure~\ref{freq_response}. Here the first run is on the left hand and second  on the right hand column. The top pane shows the time domain data plotted for a sample of two sensors. The time domain data is the same from first run to second run. However, shape and variance of time domain signal looks similar for the two sensors. To further distinguish and to motivate the need for using spectral features,  frequency response (Fourier Transform) was plotted in the middle and bottom pane for sensor1 and sensor2, respectively. From the  middle and bottom panes, we can see the frequency bins (x-axis) for peak magnitude is different for both the sensors even though the signal might look similar in time domain, but they have distinguishable spectral features. In Figure~\ref{freq_response}, if we look horizontally we can see that these features are present from one run to the second, thus showing the stability of sensor fingerprints.

% FIGURE frequency response
 \begin{figure}[!htb]
\centering
\includegraphics[height=7cm,width=12cm,keepaspectratio]{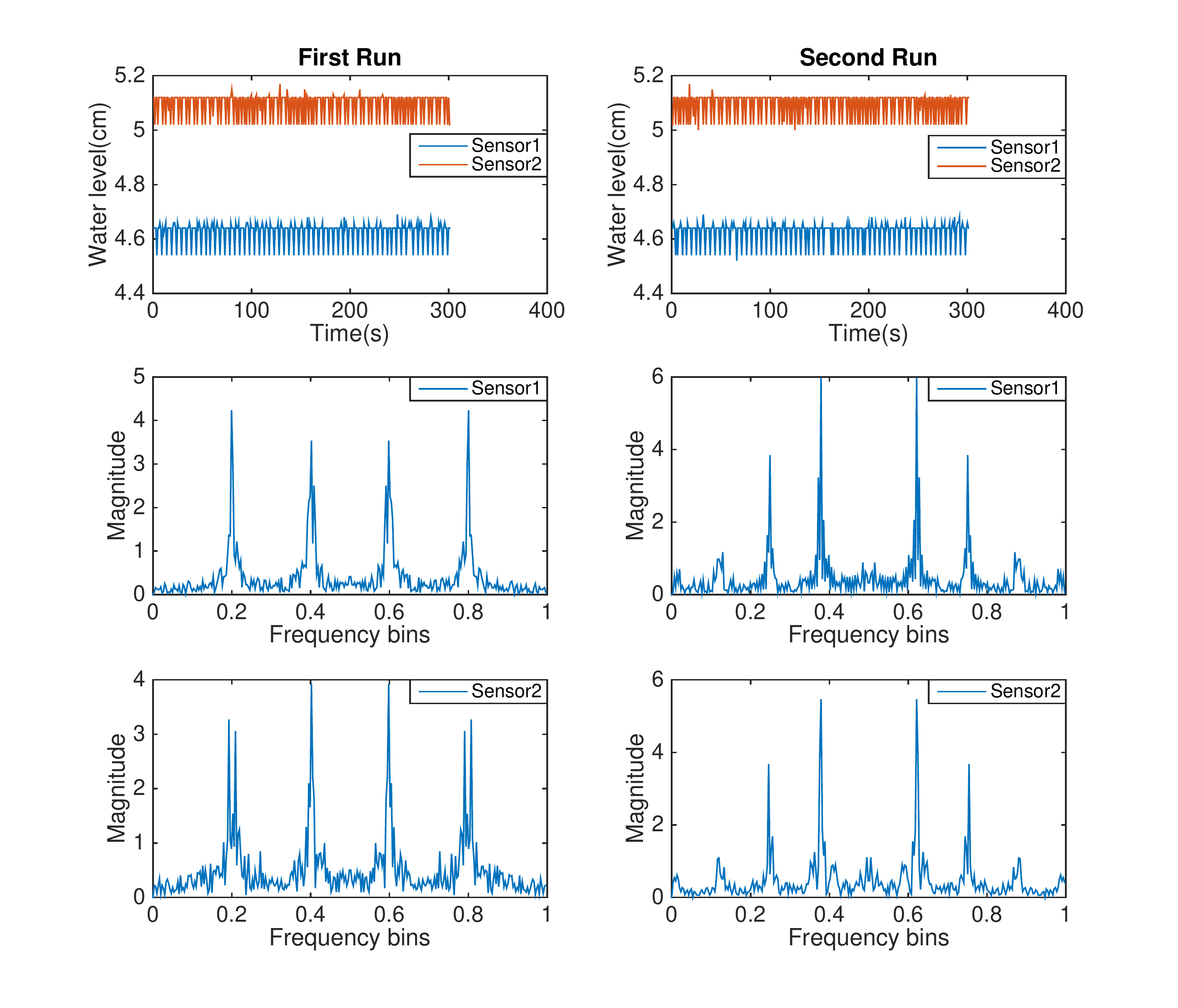}
\vspace{-2em}
\caption{Stability of the \acronym for the case of two small ultrasonic sensors. Run 1 corresponds to data collected for three hours, run 2 for two hours after removal and re-installation of sensor. In time domain both sensors time series look similar but frequency response is distinguishable and also stable for different runs.}
\label{freq_response}
\end{figure}

% Accuracy 5 to 20 sensors - Table
\begin{table}
\begin{center}
\caption{Randomly selected sensors vs. accuracy. 
%This experiment shows the scalability of the proposed idea.
}
\label{5to20sensors_accuracy}
 \begin{tabular}[!htb]{|c | c|} 
 \hline
 Number of Sensors  &  Accuracy (Chunk Size (s) = 120)  \\ %[0.5ex] 
 \hline
 5 & 96.97\% \\ 
 \hline
 10 & 97.92\%  \\
 \hline
 15 & 97.65\% \\
 \hline
 20 & 96.43\% \\ [1ex] 
 \hline

\end{tabular}
\end{center}
\end{table}

% Analog Sensor Spoofing - Table
\begin{table}
\begin{center}
\caption{Analog Sensor Spoofing Attack Detection.}
\label{sensor_spoofing_accuracy}
 \begin{tabular}[!htb]{|c | c|} 
 \hline
 Number of Sensors  &  Attack Detection Accuracy   \\ %[0.5ex] 
 \hline
 10 & 100.0\% \\ 
 \hline
 7 & 99.64\%  \\
 \hline
 3 & 99.28\% \\ [1ex] 
 \hline

\end{tabular}
\end{center}
\end{table}

\subsection{Analog Sensor Spoofing Attack Detection}
RQ6: {\em Is the sensor fingerprint based method, able to detect analog sensor spoofing attack? } We extend attack detection performance of the proposed sensor fingerprinting method to detect analog sensor spoofing attacks. Experiments are designed, on attacks reported in the literature~\cite{shoukry2015}. Such an attacker model is shown in Figure~\ref{analog_sensor_spoofing} where an attacker spoofs the reading before it reaches the sensing device. Experimental setup is shown in Figure~\ref{sensor_spoofing_exp_photo}, in Appendix~\ref{supporting_figures_appendix}. The experiment was designed to measure the distance  from a wall using active ultrasonic sensors. An active sensor has a transmitter (TX) that transmits a probe signal, and a receiver (RX) that listens to a response signal based on which the physical quantity is measured. It can also be applied in a similar way for the case of water level measurement in a water treatment plant. A legitimate sensor is placed at a fixed distance from the wall and an attacker is brought to   transmit a response signal that  deceives the sensor into believing that the spoofed signal is the actual echo from the wall. Data was collected before and after the start of the attack. Results are shown in Table~\ref{sensor_spoofing_accuracy}. For $20$ ultrasonic sensors, the attack is detected soon after it is launched. Data was collected and labeled in attack and attack-free scenarios. SVM was used to classify the presence of an attacker based on ground truth (labeled data). The intuition behind sensor fingerprint is that it is a unique characteristic for a pair of transmitter (TX) and  receiver (RX)  for an active sensor. Since, during this attack, RX is actually getting a sound wave from another TX, violating that fingerprint. This observation is intuitive, as presence of an active attacker in victim's vicinity, raise the energy (impeding sound waves on transducer) in the environment~\cite{ghosttalk_2013} and changes the noise pattern for the sensor's receiving transducer~\cite{buchla1992applied}. 

% Mujeeb: I do not understand the last two sentences above.
% To Prof. Aditya: I have re-written to say what I meant.

\Paragraph{Theoretical Proof} In the following,  theoretical guarantee is provided for detection of analog sensor spoofing attacks.   
To understand this, we need  look at how these sensors work. For example, in ultrasonic sensors, sound waves vibrate the diaphragm of the transducer and sound energy is converted into electrical signal and appears at the output of the transducer. This produced voltage is proportional to the strength of sound vibrations and is the signal of interest. This analog signal has sensor noise effects in it. In the literature~\cite{buchla1992applied}, \emph{signal to noise ratio ($SNR$)} is used to analyze the effects of noise on a signal.  

\begin{definition}
Noise floor: The magnitude of noise in a sensing device is referred to as `noise floor'~\cite{buchla1992applied}. 
\end{definition}

\begin{definition}
Energy of a time domain signal ($\mathcal{S}$) can be calculated as:
\begin{equation}
\mathcal{S} = \int_{0}^{T} V^2 dt
\end{equation}
where $V$ is the voltage level of the signal, and $T$ is time window for which $\mathcal{S}$ is measured~\cite{ghosttalk_2013}.
\end{definition}

\begin{definition}
Signal to Noise Ratio (SNR): It is defined as a ratio of signal energy ($\mathcal{S}$) to noise energy ($\mathcal{N}$)~\cite{buchla1992applied}.
\end{definition}

\begin{thm}
Let $SNR_s = \frac{\mathcal{S}_s}{\mathcal{N}_s}$, be sensor's signal to noise ratio. Under analog sensor spoofing attack, noise floor changes and $SNR_s$ deviates from the one under normal operation, and the attack is detected.
\end{thm}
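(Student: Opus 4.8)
The plan is to model the electrical signal appearing at the output of the active sensor's transducer in both the attack-free and attack regimes, and to show that the adversarial acoustic wave injected by the attacker necessarily perturbs the noise-energy term $\mathcal{N}_s$ entering the denominator of $SNR_s$, so that the post-attack ratio cannot coincide with its nominal value, after which I invoke the fingerprinting pipeline to conclude detection.

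First I would write the attack-free received signal as $V(t) = V_{\mathcal{S}}(t) + V_{\mathcal{N}}(t)$, where $V_{\mathcal{S}}$ is the legitimate echo and $V_{\mathcal{N}}$ is the intrinsic transducer noise (thermal and polarization noise of the PZT element together with front-end electronics), whose energy over the measurement window is $\mathcal{N}_s = \int_0^T V_{\mathcal{N}}^2\,dt$ and whose magnitude is the noise floor of Definition~1. Since $V_{\mathcal{S}}$ and $V_{\mathcal{N}}$ are, to first order, uncorrelated, the total received energy splits as $\mathcal{S}_s + \mathcal{N}_s$, which fixes the nominal $SNR_s = \mathcal{S}_s/\mathcal{N}_s$. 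Next I would model the attack of Figure~\ref{analog_sensor_spoofing}: the receiver now picks up an adversarial wave $V_A(t)$ in place of, or superimposed on, the true echo, and — crucially — the presence of a second active emitter in the vicinity raises the incident acoustic energy on the diaphragm, coupling into the transducer an extra fluctuation $V_{\Delta\mathcal{N}}(t)$ that originates in the attacker's emitter, the adversarial propagation path, and the multipath induced by the new source, and hence is independent of the victim's own hardware imperfections. The received signal becomes $V'(t) = V_A(t) + V_{\mathcal{N}}(t) + V_{\Delta\mathcal{N}}(t)$, and the effective noise energy is $\mathcal{N}_s' = \int_0^T \big(V_{\mathcal{N}} + V_{\Delta\mathcal{N}}\big)^2\,dt$, which by approximate independence (vanishing cross term in expectation) satisfies $\mathcal{N}_s' > \mathcal{N}_s$ whenever $V_{\Delta\mathcal{N}} \not\equiv 0$. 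Therefore $SNR_s' = \mathcal{S}_s'/\mathcal{N}_s'$ differs from $SNR_s$: either the denominator has grown, or — if the attacker instead overpowers and suppresses the legitimate echo — the numerator has changed; in either case equality with the nominal ratio fails, i.e. the noise floor deviates from normal operation.

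To turn this deviation into "the attack is detected," I would appeal to the design of \acronym\ in Section~\ref{sensorprint_design_sec}: the feature vector of Table~\ref{features} (variance, mean absolute deviation, higher moments, spectral spread and centroid, DC component) is a finite-dimensional summary of the noise-floor distribution, so a change in $\mathcal{N}_s$ displaces the feature vector off the region learned by the SVM for the legitimate TX--RX pair, and the multi-class classifier flags it, consistent with the detection accuracies in Table~\ref{sensor_spoofing_accuracy}.

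The main obstacle is precisely this last bridge. The energy-balance steps are clean, but "$\mathcal{N}_s$ changes" does not purely-mathematically imply "the SVM separates the two cases": that requires either a quantitative lower bound on the injected energy $\|V_{\Delta\mathcal{N}}\|$ relative to the classifier's decision margin, or an explicit hypothesis that any perturbation of the noise distribution exceeding the sensor's intrinsic run-to-run variability is separable. I would therefore state the conclusion under that hypothesis — namely that the spoofing-induced energy exceeds the natural variation of the noise floor, which Figure~\ref{freq_response} indicates is small and stable over different runs — rather than claim it unconditionally, and note that Table~\ref{sensor_spoofing_accuracy} empirically corroborates it.
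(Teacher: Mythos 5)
Your argument reaches the paper's conclusion but by a genuinely different decomposition of where the attacker's injected energy lands. The paper's proof keeps the noise term fixed and places the adversarial energy entirely in the numerator: under attack $SNR_a = \frac{\mathcal{S}_s + \mathcal{S}_a}{\mathcal{N}_s}$, and since $\mathcal{S}_s + \mathcal{S}_a \neq \mathcal{S}_s$ it concludes $SNR_a \neq SNR_s$ and, from there, that the noise floor has changed and \acronym detects the attack. You instead perturb the denominator, arguing that the second active emitter couples an extra fluctuation into the transducer so that $\mathcal{N}_s' > \mathcal{N}_s$, with the numerator change as a secondary case. Both are informal energy-bookkeeping arguments of comparable rigor, but yours is arguably the more internally consistent reading of the theorem statement: the paper asserts that the \emph{noise floor} changes while its own formula leaves $\mathcal{N}_s$ untouched, whereas your version makes the noise-floor change explicit. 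You also do something the paper does not: you isolate the real gap, namely that ``$SNR$ deviates'' does not by itself imply ``the SVM detects it,'' and you condition the conclusion on the spoofing-induced perturbation exceeding the run-to-run variability of the fingerprint. The paper simply asserts this final step by appeal to \acronym being ``based on sensor's noise pattern,'' so your treatment is more careful on exactly the point where the theorem is weakest.
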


\begin{proof}
The proof of the theorem follows directly from \emph{Definitions~1-3} and a similar analog signal contamination argument made in~\cite{ghosttalk_2013}. Suppose the signal to noise ratio at a sensor is given as $SNR_s = \frac{\mathcal{S}_s}{\mathcal{N}_s}$, where $\mathcal{S}_s$ is signal strength and $\mathcal{N}_s$ is sensor noise at the output of the transducer of the sensor. During the attack signal to noise ratio becomes $SNR_a = \frac{\mathcal{S}_s + \mathcal{S}_a}{\mathcal{N}_s}$, where $SNR_a$ is signal to noise ratio under attack and $\mathcal{S}_a$ is the attacker's signal energy. Since $\mathcal{S}_s + \mathcal{S}_a \neq \mathcal{S}_s$, hence  $SNR_a \neq SNR_s$. From this result we can see that in the presence of an attacker, \emph{noise floor} is changed and since \acronym,  is based on sensor's noise pattern, attack will be detected as sensor's fingerprint could not be authenticated.   
\end{proof}

\subsection{Discussion} \label{discussion_sec}

\Paragraph{Accuracy}
The proposed method is intended to complement the existing intrusion detection systems. Legacy intrusion detection systems are either host based or network based~\cite{CPS_IDS_survey2014}, and hence those may not detect physical or analog sensor spoofing attacks. The method  proposed in this work is based on the physical structure and hardware of the devices, making it an intrusion detection system for physical attacks. However, because of the very nature of the proposed scheme, feature vector has some randomness and few sensors occasionally have overlapping feature sets. This leads to an identification accuracy less than $100\%$. Nevertheless, in the majority of the cases, sensor identification accuracy more than $97\%$ was achieved which makes our proposed method comparable to those proposed in the literature. % For comparison purposes, fingerprinting methods mentioned in related work section, have accuracies ranging from $80\%$ to $100\%$, 

\Paragraph{Scalability}
Results in the previous section indicate that \acronym scales well with an increase in the number of sensors to be fingerprinted. However, in comparison with the  device fingerprinting work in CPS domain, authors in~\cite{raheem2016}, created fingerprints for $2$ relay devices, from two different vendors, with an accuracy of $92\%$. In comparison, our method performs better with an accuracy of $97\%$ for tens of different sensors of same model.

\Paragraph{Robustness against Forgery}
Even though an adversary had learned the sensor noise, we believe each new device has it's own noise, so when an adversary replaces a device, it can not modify it's hardware to get the same noise pattern as another device without affecting its  performance or intended application.  Physically tampering a the device would change the noise pattern. If we assume a powerful attacker who can learn the noise distribution and reproduce it, our proposed scheme is still helpful for raising the bar for attackers. Learning hardware characteristics, for all the hardware devices in the network, is time consuming  and will likely raise suspicion and ultimately reveal the presence of an attacker to system operators.

\Paragraph{Application in Real-World CPS}
We have collected and tested the proposed method for a data set collected over a period of six months in a real water treatment testbed and for a couple of week data from a water distribution testbed. The results are promising for such a time period. However, it is recommended to train the classifiers after every plant maintenance cycle. Moreover, being used in a testbed for six months is different from being used in a real world production system of physical plants with possibly more harsh environment especially for the case of level measurements including rivers, dams etc. Although the testbeds  used in the reported experiments imitates real water treatment plants as close as possible but we believe the sensors and actuators wear out with time, rendering them less accurate. There is a possibility that those environmental effects may change the fingerprint but according to our hypothesis each sensor will be affected in a distinct way and, if retrained, will  possess a unique fingerprint. As far as the ambient noise or interference is concerned that would affect all the devices in a similar manner, letting us to cancel out those effects from sensor fingerprint.  

% Mujeeb: Did you also use sensors from WADI? 
% To Prof. Aditya: Yes I used level, flow and pressure sensors from WADI

%\subsection{Challenge-Response for Cyber Attacks}
%A challenge-response scheme based on \emph{SensorPrint} is proposed. This is to detect the cyber attacks, e.g. \emph{replay attack}. Sensor values were replayed during the STUXNET attack~\cite{stuxnet}. Without addition of challenge-response, \emph{SensorPRint}, might fail to detect a replay attack as the noise fingerprint for each sensor will also be replayed. However, we propose a challenge signal which comes from the quantity to be measured. Based on an idea similar to analog sensor spoofing attack for active sensors~\cite{shoukry2015}, we place a transmitter between the sensor and the quantity to be measured. This transmitter  triggered by a pseudo-random function, will spoof the sensor reading by a known quantity. If attacker fails to spoof similar reading as our challenge, it will reveal itself.

\section{Towards Physical Quantity based \emph{Challenge-Response} Protocol} \label{challenge_response_protocol_sec}
%Why we need such a challenge-response protocol
The proposed \acronym works well in the case of physical manipulation or physical layer signal spoofing in the sensor. However,  it is challenged by attackers who can learn and inject the correct sensor noise in the digital domain while spoofing the real measurements. For example, in the case of a replay attack, sensor noise from previous readings also gets replayed while preserving the sensor noise fingerprint. Another example is a traditional \emph{man-in-the-middle} attacker who can learn the sensor noise pattern and, while injecting fake sensor measurements, also adds the sensor noise fingerprint making it hard for \acronym to detect such an attack. We propose a novel \emph{challenge-response} protocol in which a challenge is produced in the physical quantity to be measured. Traditionally, a challenge is generated in the digital domain\,\cite{bruno_mo_physicalwatermarking_2015,shoukry2015} and its effect is observed on sensor measurements.
% FIGURE: Challenge-response
 \begin{figure}[!htb]
\centering
\includegraphics[scale=.35]{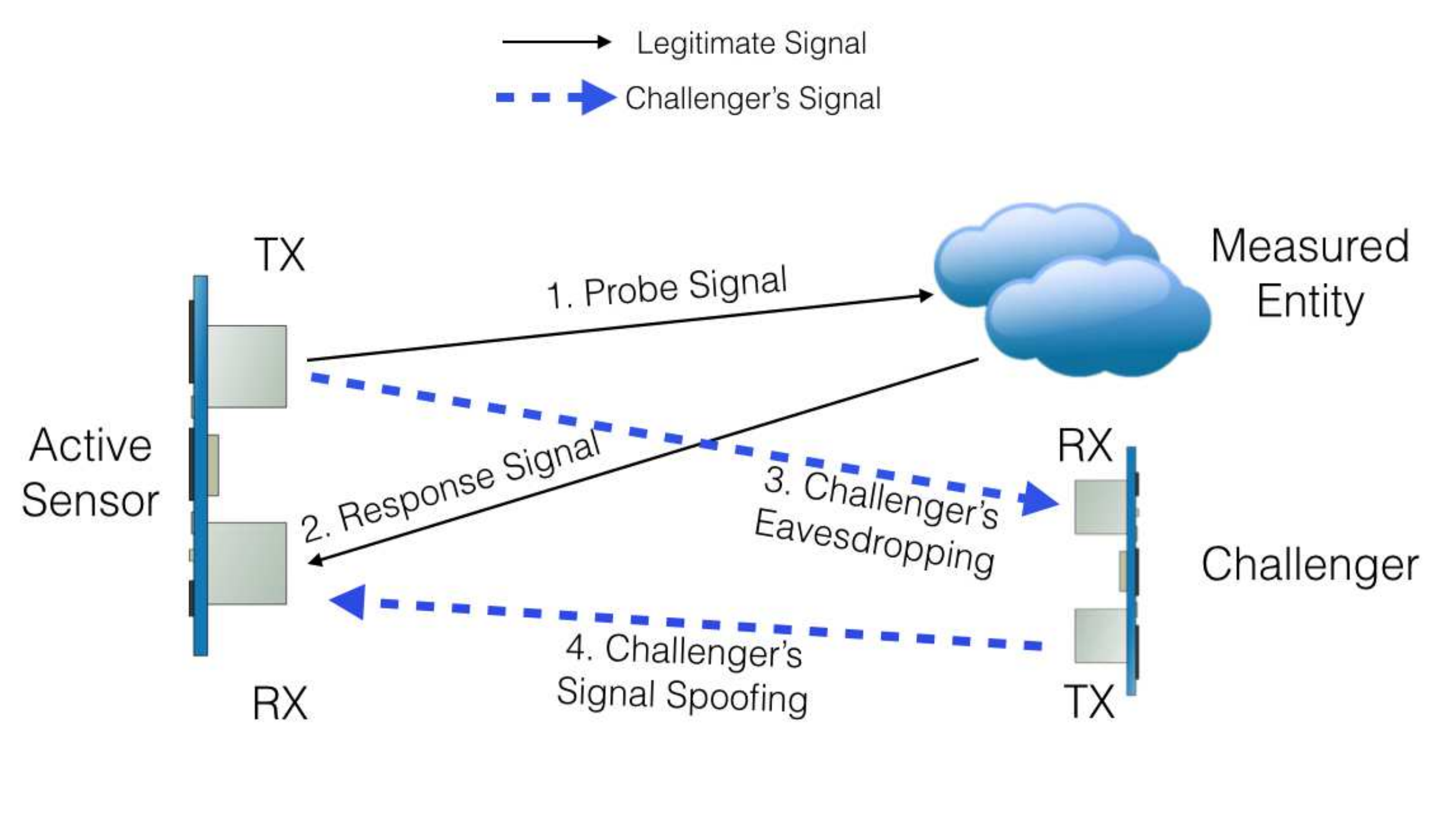}
\caption{Challenge-Response Protocol.}
\label{challenge_response_protocol}
\end{figure}
In our scheme, the challenge originates from the physical/analog domain. The challenger, will be between the physical quantity and the receiver~(RX) of the sensor. We send the challenge as a fake measurement and observe it's effect in sensor's output. This proposed \emph{challenge-response} protocol, enables \acronym, to detect a strong cyber adversary. 
Essentially, a special sort of analog spoofing attack is launched which, as discussed in the previous sections, is  
expected to be reflected and identifiable by the sensor readings, as this attack will change the noise fingerprint.

% Mujeeb: Too long a sentence..divide into two.

%Why and how it works 
Consider the  attack scenario \emph{Replay and Advanced Sensor Spoofing Attack}( $\mathcal{S}8$) as explained in section~\ref{threat_model_sec}. The proposed \emph{challenge-response} protocol would detect replay attack as the attacker would not be able to replicate the effect of a fresh challenge. Now,  consider an even stronger attacker who can actively receive sensor data and tries to discover the presence of the \emph{challenge}. We train the machine learning algorithm on \emph{legitimate sensor-challenger pair fingerprint} in addition to training it for sensor noise fingerprint. With the \emph{challenge-response} protocol, \acronym tests sensor data for the presence of the sensor's as well as the challenger's fingerprint. In the presence of a challenge, an active attacker can observe the change in the sensor fingerprint but it needs time to learn the challenger's fingerprint. Machine learning based identification works on a chunk of data, and it takes time to collect that data and to make a decision. Therefore, an attacker needs some time and a chunk of data to learn challenger's fingerprint. After learning th challenger's fingerprint, an attacker might try to add the effect of challenger's noise in sensor measurements but it will appear with a delay (due to the time it took to learn the change), which will likely expose the attacker.  There is an important consideration in designing such a \emph{challenge-response} protocol; so that it should not effect the normal functionality of the industrial control system. 
 
\Paragraph{Design of Challenge-Response Protocol} In Figure~\ref{challenge_response_protocol}, a proposed setup for \emph{challenge-response} protocol is shown. A \emph{challenger} is a device similar to the one  used for sensing. This \emph{challenger} is placed between the active sensor and the measured entity. While active sensor waits for response to the probe signal, the \emph{challenger} transmits a signal to spoof the fake reading. This is similar to analog sensor spoofing attack as explained in section~\ref{threat_model_sec}. Unlike analog spoofing attack, the \emph{challenger} needs to spoof a reading hat is close to the reading that would be received by a real sensor. A naive \emph{challenger} would send a random reading that is not  close to the real quantity and we would be able to observe it's effect on the sensor's data but it might disturb the physical process as our \emph{challenger} itself is attacking sensor measurements. One important consideration in the design of \emph{challenger} is to not  disturb the control logic because of the added \emph{challenge} in the sensor data. Receiver (RX) is used as the challenger to receive the probe signal from the active sensor's transmitter (TX). Based on the probe signal, the \emph{challenger} can calculate a quantity to be spoofed. For example, if the sensor is an ultrasonic sensor used to measure the distance, then by listening to sensor's probe signal, the \emph{challenger} is able to calculate the exact quantity and it can generate the new spoofed signal which is close to the real quantity. Thus the spoofed signal will not trigger any wrong actuation due to the added \emph{challenge} in the sensor but  will change the noise fingerprint of the sensor. Such a \emph{challenger} may not work in all cyber physical systems but it is practical in the use cases studied in this work. In the case of an ultrasonic level sensor installed in the  testbeds used in this work,  the \emph{challenger} can be placed, for example, on water surface on a buoy, and it would be able to calculate the challenge signal which is close to the real system state. Following are the steps to create a \emph{challenge} as shown in Figure~\ref{challenge_response_protocol}:
\begin{enumerate}
  \item Transmitter (TX) of the active sensor sends a probe signal towards the quantity to be measured.
  \item Response signal arrives at the sensor's receiver (RX) resulting in measurement of the physical quantity.
  \item \label{receiver} Receiver (RX) of the \emph{challenger} is passively listening to probe signals. This eavesdropping will help it to calculate the amount of signal to be spoofed so that it does not disturb the control logic.
  \item Transmitter (TX) of the \emph{challenger} spoofs a signal to the receiver (RX) of the active sensor. The amount of fake measurement to be transmitted is calculated in step\,\ref{receiver}. This will keep the measurement close to the real quantity but change the noise fingerprint of the sensor.
\end{enumerate}

\Paragraph{Security argument}: The proposed challenge is an instance of an analog spoofing attack, which has been shown to alter
the noise profile of the sensor under test in the previous subsection. Therefore, by challenging a sensor at a random 
time $t$ for a random $\delta$ period, we expect the values of the sensor (as for instance reflected in the ICS historian) to have the (anomalous) noise fingerprint of an analog spoofing attack. If not, we can suspect a cyber attacker to be spoofing the historian values. An attacker who is aware of this \emph{challenge-response}, but at the same time is spoofing the real values, needs thus to consistently reflect the anomalous noise profile starting at time $t$ and for $\delta$ seconds. However, if the challenge is close to the real physical measurement, an attacker needs to wait $\Delta$ seconds in order to recognize that the noise profile has changed, at the beginning and at the end of the challenge. Therefore, he can at most react consistently with the expectations of the challenger at time $t + \Delta$ and stop at $t+\delta+\Delta$. As is shown in the previous sections,  $\Delta$ needed to detect a change in the noise profile is significant (around 1~minute) and can be leveraged to detect incoherent responses to the challenge.

\section{Related work}
\label{related_work_sec}
\emph{Device Fingerprinting}:
The approach presented in this article is inspired by the idea of using sensor noise as a fingerprint for camera identification\,\cite{lukas2006}. In\,\cite{lukas2006}, images are taken by a camera and filtered to obtain noise components and averaged for all images. This resultant noise vector acts as a reference pattern for test images. An image is tested against reference patterns for all cameras being studied and matched with one having the highest correlation with image's noise vector. In\,\cite{nick-2009} prospects of sensor identification, based on \cite{lukas2006}, are studied. The authors analyzed the effects of varying number of images to get a reference pattern. It is shown that the noise fingerprinting method  as in\,\cite{lukas2006}, can be used to counteract injection attacks at the time of checks in biometric systems as well as to establish evidence in a criminal incidence.
In \cite{khanna-2007} a method for image authentication from flatbed desktop scanners is presented. This method is similar to that used in earlier works for identifying digital cameras using pattern noise of the imaging sensor. In\,\cite{prabhu-2011} seven techniques for extracting unique signatures from NAND flash devices are studied. These techniques are experimentally evaluated using thirteen different flash memory devices. The techniques can help identify and authenticate electronic devices that use  flash devices. The idea of fingerprinting a device remotely based on it's hardware is presented in\,\cite{kohno2005}. Small microscopic deviations in device's clock\,\cite{moon1999,paxson1998} are used as fingerprint for the particular device.
In\,\cite{raheem2015} inter arrival time of packets is analyzed to fingerprint devices on a small campus network. In \cite{boris2009}, 50 RFID smart cards from the same manufacturer and type are tested for fingerprints. 

\skipnoindent \emph{Smartphone Fingerprinting}:
In\,\cite{dey-2014} hardware imperfections during the sensor manufacturing process are exploited as a fingerprint. Accelerometers in smartphones are used to create fingerprints. Experimental results show that when the properties of these imperfections are extracted, the device, and ultimately a user, can be identified. In\,\cite{das-2014} a smartphone speaker is used as a fingerprint component. During fabrication, subtle imperfections arise in device speakers which induce anomalies in produced sounds. Experimental results show high accuracy of the proposed method for speakers from same as well as different vendors. Another work\,\cite{bojinov2014} used sensor fingerprinting to identify a mobile device. In\,\cite{bojinov2014} the speakerphone-microphone system and the accelerometer in a typical smartphone were used to identify the mobile device. The article listed types of sensors and the related imperfections which can be useful in identifying systems. \cite{das2015} analyze techniques to mitigate device fingerprinting either by calibrating the sensors to eliminate the signal anomalies, or by adding noise that obfuscates the anomalies.

\skipnoindent\emph{Software based Fingerprinting}: A technique is proposed \cite{franklin2006}, that identifies the wireless device driver running on an IEEE 802.11 compliant device by passive monitoring. 
In \cite{desmond2008} unique devices over a Wireless Local Area Network (WLAN) are fingerprinted in a passive manner through the timing analysis of 802.11 probe request frames. 
Nmap \cite{nmap}, uses active fingerprinting by sending specific requests to determine operating systems and server versions.
On the other hand p0f \cite{p0f}, is a tool which can determine operating system and browser version of a client by passively monitoring TCP and HTTP header fields. 
The version and configuration information of the web browser will be transmitted to websites upon request \cite{eckersley2010}, which can be used to fingerprint the devices on which these browsers are running. 
 In \cite{olejnik2012}, indirect history data, such as information about categories of visited websites can also be effective in fingerprinting users. In \cite{gulmezoglu2017}, 25 applications on cloud are fingerprinted based on cache access pattern. 
 
\skipnoindent\emph{RF Fingerprinting}: 
 RF fingerprints have been proposed using
a measured temporal link signature to uniquely identify the link between a transmitter and a receiver\,\cite{patwari2007,faria2006}. A similar approach based on received signal strength is shown for the case study of a smart water treatment plant \cite{jay2017}. Modulation domain of RF signals is also studied to generate unique device fingerprints \cite{brik2008}. Researchers have shown that signal waveforms can also be used for device fingerprinting amid manufacturing inconsistencies \cite{hall2005,remley2005}. Besides wireless fingerprinting research has shown that it is possible to fingerprint devices based on Ethernet NICs analog signals \cite{gerdes2006}. 

\skipnoindent\emph{CPS Device fingerprinting}: In\,\cite{raheem2016} authors focus on the idea of device fingerprinting in ICS. One approach in\,\cite{raheem2016} is based on traditional network traffic monitoring and observing message response times, while the second approach is based on physical operation times of a device. Analysis is carried out on $2$ latching relays based on their operation timings. Another related work presented a preliminary study, on the idea of sensor fingerprinting in~\cite{ahmed_QRS2017}, using $2$ sensors, based on correlation analysis with an accuracy of $86\%$. However, to the best of our knowledge, this article present the first attempt to present a rigorous analysis on a multitude of sensors specific to ICS. Our machine learning based sensor fingerprint successfully identifies sensors with an accuracy as high as $99\%$. This is first work to present a holistic scheme to detect physical and cyber attacks using physical quantity based \emph{challenge-response} protocol.

\section{Conclusions}
\label{conclusions_sec}
In this paper we have presented a sensor identification technique for ICS. The proposed method creates a fingerprint of each sensor based on sensor's noise statistics. The feature set used is extracted from sensor noise and  is leveraged for sensor identification. Machine learning is used to identify the installed sensors. Uniqueness in the fingerprint is the result of manufacturing imperfections even for the same type and model of the device.
The results in this article point to an effective  method for hardware identification without affecting the performance of the host system. The results reported show that it is possible to identify whether test data is generated by the  installed sensor or by a replaced sensor. Out of a $44$ total sensors, $31$ were identified with an accuracy more than $96\%$ and $13$ with an accuracy above $90\%$. Also, using true positive and false positive rate as a metric, we got a true positive rate of $100\%$ for most of the sensors and a false positive rate of $0 \%$. Analog sensor spoofing attacks are detected with an accuracy of $100\%$. Moreover, a \emph{challenge-response} protocol is proposed that is able to detect advanced cyber attackers by means of sensor fingerprinting.

%The results presented in this work are promising and we plan to extend this work on similar lines. 
In the future, we plan to extend the proposed technique to fingerprint actuators as well. Also we plan to extend the feature set considered for fingerprinting based on the physical working principles of sensors. 

%\newpage

% Mujeeb: Why the \newpage command?
% To Prof. Aditya: Thanks for correction, previously it was there to separate reference from rest of the paper neatly

%\section*{Acknowledgements}

%\bibliographystyle{ACM-Reference-Format}
\bibliographystyle{unsrt}
\bibliography{references.bib}

\appendices
%\section{Proof of the First Zonklar Equation}
%Appendix one text goes here.
%
%
%\appendix
%\section*{Appendix}

\section{Physical Sensor Swap Attack in SWaT Testbed}
\label{swap_attack_example}
\skipnoindent \textit{An Example Attack}: An attack scenario is created, comprising attacker goal of physical damage$(\mathcal{G}1)$,  an insider with the physical access $(\mathcal{C}1)$ and with an objective of deceiving the controllers $(\mathcal{O}3)$. \emph{Sensor swap attack}, as explained in Section~\ref{threat_model_sec}, can achieve attacker's goal as shown in following attack, which is launched on a real water treatment testbed. During the normal operation of the plant, if $s_{11}$ indicates a low water level in tank~1, $PLC_{1}$ issues a command to stop the outlet pump, to avoid dry running of the pump (which can result in physical damage). Similarly, if the tank~1 is full (based on $s_{11}$ values), the inlet to tank~1 should be closed to avoid flooding. But if we swap sensor $s_{11}$ with sensor $s_{22}$, the new sensor configuration is $s_{21}$ which means sensor~2 is placed on top of tank~1 (but this sensor is still transmitting data to $PLC_2$) and $s_{12}$ which represents sensor~1 on tank~2 (but this sensor is still transmitting data to $PLC_1$). Now, as soon as we start the attack, $s_{21}$ indicates that the tank level is high (which indeed is high as it is placed over tank~1 instead of tank~2), $PLC_{2}$ will keep running the  outlet pump ultimately resulting in pump running dry and damage (assuming that there is no mechanical interlock that prevents dry running). 
%Using sensor swap attack, another possible goal $\mathcal{G}$2 (reduced production) of an attacker, can be achieved if the state at the time of attack for $s_{21}$ indicates a low level. The  pumping will stop although we may have enough water in tank~2 and this may result in loss of performance.

%Figure : Sensor Swap Attack plot 

 \begin{figure}[t]
 \centering
\includegraphics[scale=.3]{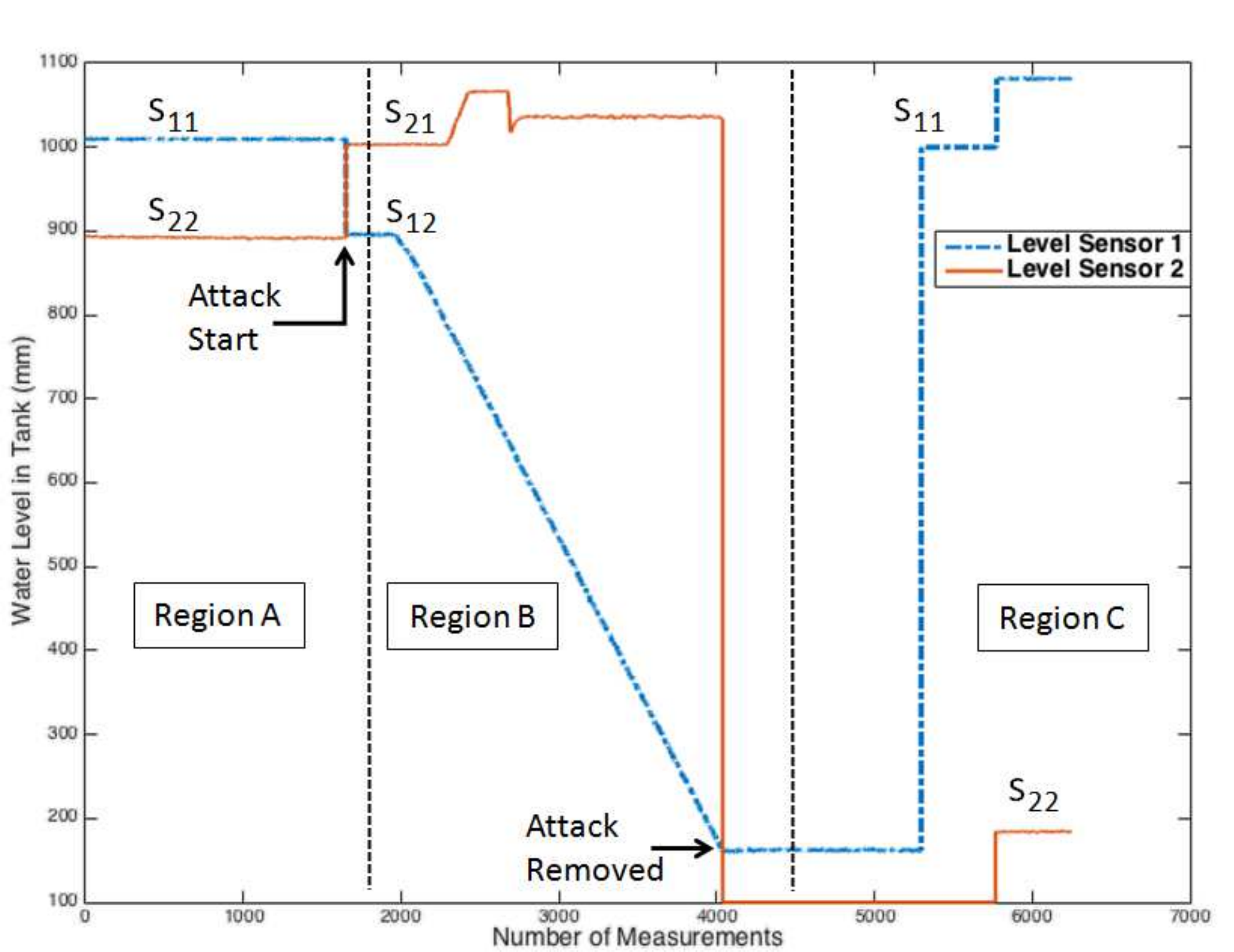}
\caption{Impact of swapping two sensors on the measurements received by corresponding PLCs. $s_{ij}$ represents $i^{th}$ sensor in $j^{th}$ tank. At the point of attack start $PLC_{1}$ will start receiving values from sensor 1 in tank 2 ($s_{12}$) and $PLC_{2}$ will receive $s_{21}$.}
\label{swap-attack}
\end{figure}  

\skipnoindent  Figure~\ref{swap-attack} illustrates the impact of a \emph{sensor swap attack}.  The attack start and  end times are  as labeled in Figure~\ref{swap-attack}.  When the attack starts, the level readings from $s_{11}$ and $s_{22}$ get exchanged. As the sensors are physically swapped, they keep sending data to their respective PLCs but those measurements are no longer coming  from the same normal process. If the  attack is not removed, the pump at stage two would be damaged, as it would continue to run without any water in the tank. The pump will stay ON because the corresponding level sensor is transmitting high water level to PLC. Note that as programmed in water treatment testbed, the PLC logic has   a low water level limit below which it is dangerous to keep pumping. However,   due to this attack, the PLC can be deceived. Regions A, B and C in Figure~\ref{swap-attack} show, respectively,   the sensor readings before, during, and after the attack. A detailed analysis on such an attack and preliminary results based on sensor fingerprint, is presented in form of a short paper~\cite{ahmed_QRS2017}.

% Appendix SVM
\section{Support Vector Machine Classifier} 
\label{sec:svm}
SVM is a data classification technique used in many areas
such as speech recognition, image recognition and so on \cite{akata2014}. The aim of SVM is to produce a model based on the training data and give classification results for testing data. For a training set of instance-label pairs $(x_i, y_i), i = 1,...,k$ where $x_i \in  \mathbb{R} ^n$ and $y \in \{1, -1\}^k$, SVM require the solution of the
following optimization problem:

\begin{equation}
\begin{aligned}
& \underset{w,b,\zeta}{\text{minimize}}
& & \frac{1}{2}w^Tw + C \sum_{i=1}^{k} \zeta_i \\
& \text{subject to}
& & y_i(w^T\phi(x_i) + b) \geq 1 - \zeta_i, \\
&\text{where} \ \zeta_i \geq 0.
\end{aligned}
\end{equation}

The function $\zeta$ maps the training vectors into a higher dimensional space. In this higher dimensional space a linear separating hyperplane is found by SVM, where $C > 0$ is the penalty parameter of the error term. For the kernel function in this work we use the radial basis function:

\begin{equation}
K(x_i,x_j) = exp (-\gamma || x_i - x_j ||^2), \gamma > 0.
\end{equation}

In our work, we have multiple sensors to classify. Therefore, multi-class SVM library LIBSVM \cite{libsvm} is used.

\section{Water Treatment Testbed}
\label{swat-testbed}
%\subsection{Architecture}
It is a fully operational (research facility) scaled down water treatment plant producing 5\,gallons/minute of doubly filtered water, this testbed mimics large modern plants
for water treatment. Following is the brief overview of the testbed, for further details, please refer to~\cite{swat2016}.
%\,\cite{sridhar2015}. Figure~\ref{SWaT-arch}, shows a simplified architecture of  the testbed. 

\Paragraph{Water Treatment Process}
The treatment process consists of six distinct stages each controlled by an independent Programmable Logic Controller (PLC). Control actions are taken by the PLCs using data from sensors. Stage P1 controls the inflow of water to be treated by opening or closing a motorized valve MV-101. Water from the raw water
tank is pumped via a chemical dosing station (stage P2, chlorination) to another UF (Ultra Filtration) feed water tank in stage P3. A UF feed pump in P3 sends water via UF unit to RO (Reverse Osmosis) feed water tank in stage P4.
Here an RO feed pump sends water through an ultraviolet dechlorination unit controlled by a PLC in stage P4. This step is necessary to remove any free chlorine from the water prior to passing it through the reverse osmosis unit in stage P5. Sodium bisulphate (NaHSO3) can be added in stage P4
to control the ORP (Oxidation Reduction Potential). In stage P5, the dechlorinated water is passed through a 2-stage RO filtration unit. The filtered water from the RO unit is stored in the permeate tank and the reject in the UF backwash tank. Stage P6 controls the cleaning of the membranes in the UF
unit by turning on or off the UF backwash pump.
 
\Paragraph{Communication Network and Vulnerabilities}
Each PLC in the testbed obtains data from sensors associated with the corresponding stage, and controls pumps
and valves in its domain. PLCs communicate with each other through a separate network. Communications among sensors, actuators, and PLCs can be via either wired or wireless links.
Attacks that exploit vulnerabilities in the protocol used, and in the PLC firmware, are feasible and could compromise the communication links between sensors and PLCs, PLCs and
actuators, among the PLCs, and the PLCs themselves. Having compromised one or more links, an attacker could use one of several strategies to send fake state data to one or more PLCs.

\iffalse
%FIGURE: SWAT  
    % SWAT Architecture

 \begin{figure*}[!htb]
\centering
\includegraphics[scale=.6]{figures/SWaT_simplified.eps}
\caption{Water treatment testbed: P1 though P6 indicate the six stages in the treatment process. Arrows denote the flow of water and of chemicals at the dosing station.}
\label{SWaT-arch}
\end{figure*}  
\fi

\section{Water Distribution Testbed}
\label{wadi-testbed}
%In this section  the design, process and communication architecture of the testbed is described. 
It is an operational testbed  supplying 10 US gallons/min of filtered water.  It represents a  scaled-down version of a large water distribution network in a city.  It contains three distinct control processes labeled P1 through P3, each controlled by its own set of PLCs. 

\iffalse
% Figure WADI_Photo

\begin{figure}[h] %{0.3\textwidth}
\includegraphics[width=0.5\textwidth,height=5cm]
{figures/WAI_photo.eps}
\caption{A pictorial view of WADI. Six consumer tanks are seen in the front, while elevated reservoir tanks and primary grid tanks can be seen in the background. Return water section is on left hand side (missing in the photo).}
\label{WADI_photo}
\end{figure}

\fi

\Paragraph{Stages in WADI}
%The testbed is designed to  account for the likelihood of low (or no) demand occurring during weekends and allow user to input various flow rate  (subjected to maximum of 10 US gallons/min) to simulate water consumption in accordance with time varying demand patterns. As shown in Figure\,\ref{WADI_arch},  
Water distribution process is  segmented into the following sub-processes:  P1: Primary grid, P2: Secondary grid, P3: Return water grid. 

\noindent{\em Primary grid}: The primary grid  contains two raw water tanks of 2500\,liters each, and a level sensor (1-LIT-001)  to monitor the  water level  in the tanks.  Water intake into these two tanks can be from the  water treatment plant,   from Public Utility Board inlet, or from the  return water grid.   A chemical dosing system is installed to maintain adequate water quality.  Sensors are installed to measure the water quality parameters of the water flowing into and out of the primary grid.

% Figure: WADI Architecture
\iffalse
\begin{figure*}[t] %{0.3\textwidth}
\includegraphics[width=13cm]{figures/WADI_arch.eps}
\caption{Solid arrows indicate flow of water and sequence of processes. S and A represent, respectively, sets of sensors and actuators.}
\label{WADI_arch}
\end{figure*}
\fi

\noindent{\em Secondary grid}: This grid  has two elevated reservoir tanks and six consumer tanks. Raw water tanks supply water to the elevated reservoir tanks and, in turn,  these tanks supply water to the consumer tanks based on a pre-set demand pattern. Once consumer tanks meet their demands, water  drains to the return water grid. Return water grid is equipped with a tank. 

\Paragraph{Communications Infrastructure}
%Within each process stage, a  PLC obtains data from local sensors and controls the actuators such as pumps and valves. In addition to the actuators, sensors such as level sensors in each tank, enable the PLCs to monitor the status of the system and to decide when to turn a pump ON or OFF. Several other sensors are available to check the physical and chemical properties of water flowing through the three stages. 

\iffalse

\begin{figure*}[tb] %{0.3\textwidth}
\includegraphics[width=0.5\textwidth]
{figures/WADI_comm.eps}
\caption{Architecture of the control portion of a CPS. P1, P2,. . . ,Pn denote PLCs. Each PLC communicates with its sensors and actuators through a local network at Level 0. PLCs communicate among themselves via another
network at Level 1. Communication with SCADA and other computers is via a Level 3 network not shown here.}
\label{WADI_comm}
\end{figure*}
\fi

The communication network contains layer-0 (L0), layer-1 (L1) and layer 2-(L2). L0 is at process level and connects actuators/sensors and I/O modules via RS485-Modbus protocol. L1 is the plant control network where all PLCs are connected to a central node in a star topology. Communication among PLCs and RTUs takes place over Ethernet switches using NIP/SP based on TCP and High Speed Packet Access (HSPA) cellular gateways using GPRS modem. L2 is a communication network between a touch panel Human-Machine Interface (HMI) and the plant control network. This network is implemented using star topology and consists of PLCs and RTUs. A firewall  isolates the enterprise network from the plant control network. A SCADA workstation provides an interface between the plant operators  and PLCs  for remote monitoring and control.

\section{Supporting Figures/Tables}\label{supporting_figures_appendix}

\iffalse
% Figure: SWAT Testbed PHOTO

\begin{figure}[!h]
\centering
\includegraphics[width=3in,height=2in]{}
\caption{Experiment Setup: Secure Water Treatment Testbed Plant Layout (SWaT)}
\end{figure}

% FIGURE static process {0.45\linewidth}
\begin{figure*}[tb]
\subfloat[Sensor Noise]{\includegraphics[width=7cm]{figures/lit101_static_esorics.eps}}\label{static-noise-1}
\subfloat[Probability Distribution for Sensor Noise]{\includegraphics[width=7cm]{figures/lit101_static_pdf_esorics.eps}}\label{static-noise-2}
\caption{Sensor noise from one of the ultrasonic level sensors in the water treatment testbed.}\label{static-noise}
\end{figure*}

\fi

% FIGURE Ultrasonic sensor experiment

 \begin{figure}
\centering
\includegraphics[scale=.3]{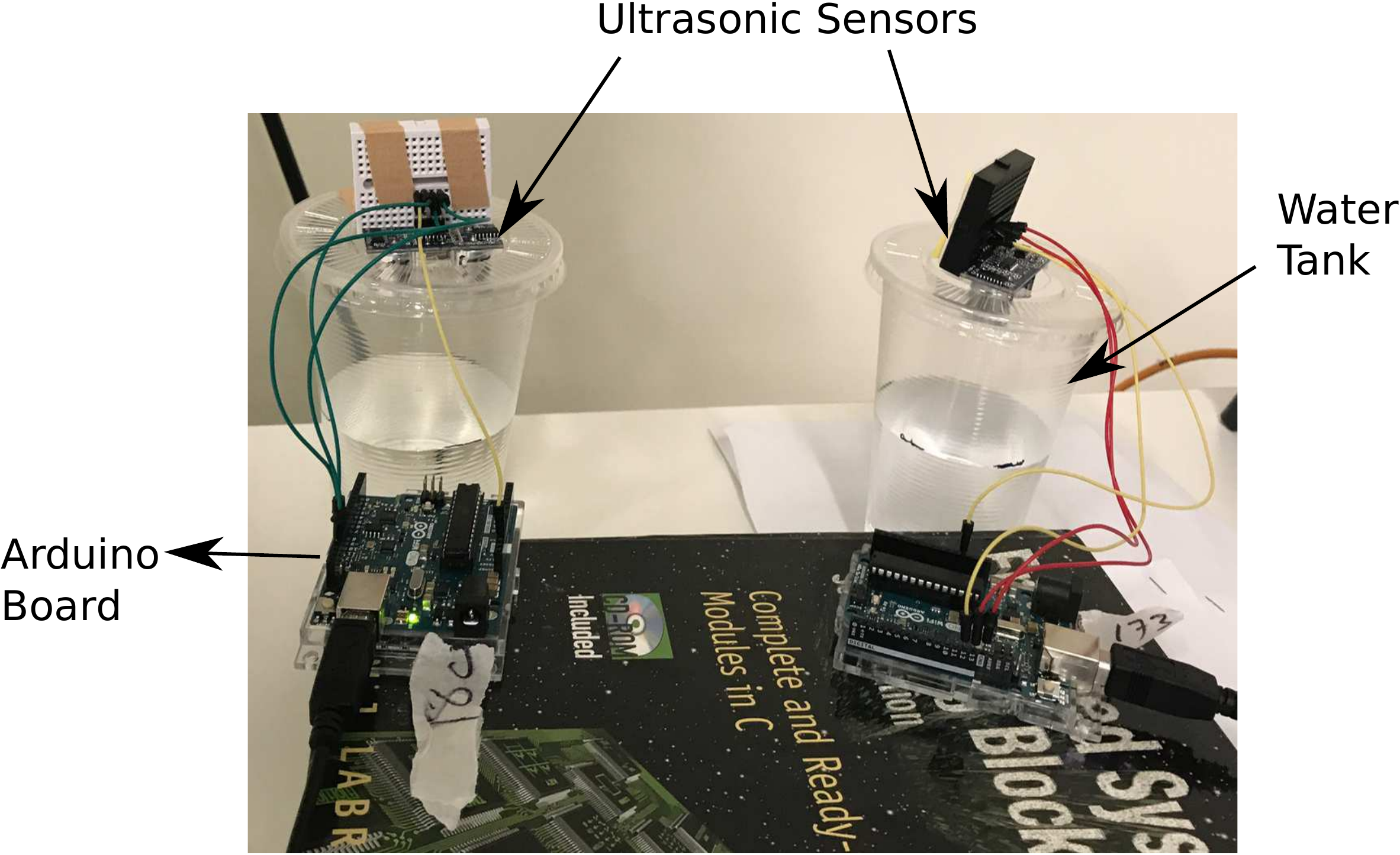}
\caption{Experiment setup with small ultrasonic sensors (HCSR04). A glass of water is filled up to a marked level to simulate the water tank. Sensors are controlled by Arduino Uno and data is collected.}
\label{hcsr04}
\end{figure}

% FIGURE Analog Sensor_Spoofing Experiment

 \begin{figure}[!htb]
\centering
\includegraphics[scale=.5]{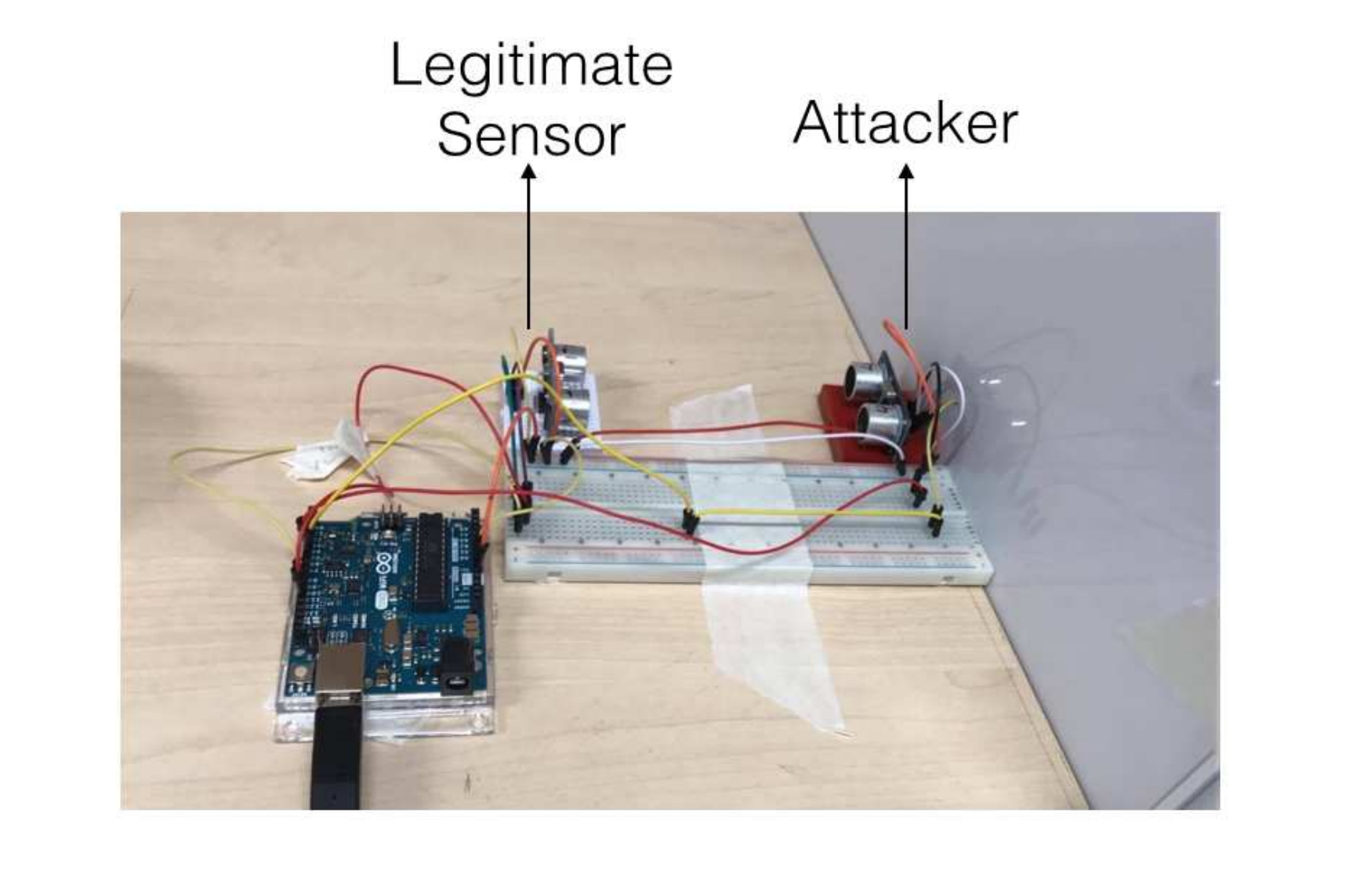}
\caption{Experiment setup with small ultrasonic sensors (HCSR04). A experiment is set to measure a fixed distance between the sensor and the wall. Sensors are controlled by Arduino Uno and data is collected.}
\label{sensor_spoofing_exp_photo}
\end{figure}

% FIGURE Confusion Matrix for 20 sensors

 \begin{figure}[!htb]
\centering
\includegraphics[scale=.45]{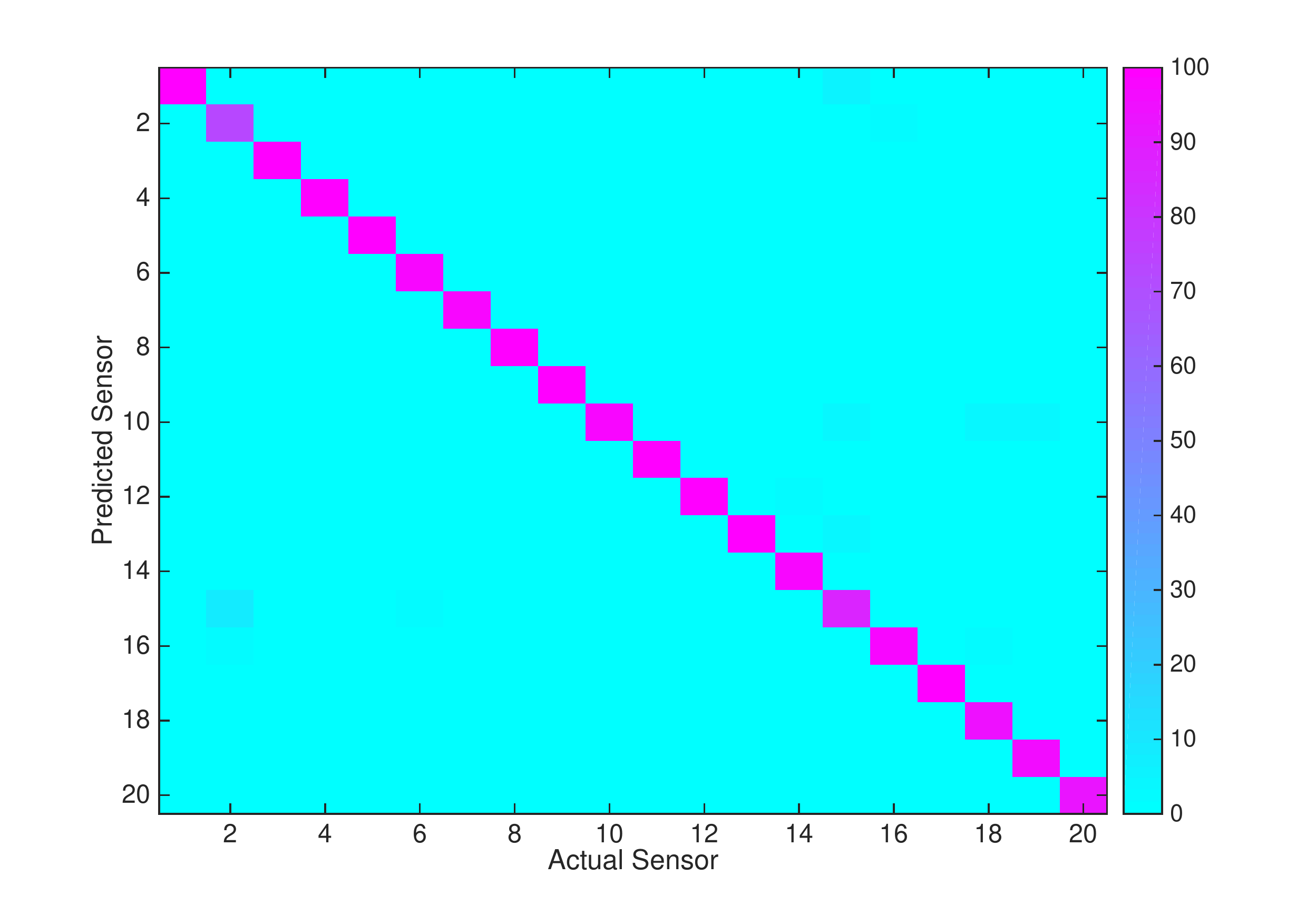}
\caption{Confusion matrix for 20 small ultrasonic sensors.}
\label{confusion_matrix}
\end{figure}

\end{document}